\theoremstyle{plain}
\newtheorem{thm}{Theorem}
\newtheorem{lem}[thm]{Lemma}
\theoremstyle{definition}
\theoremstyle{remark}
\newtheorem{rem}{Remark}
\newcommand{\prn}[1]{\left(#1\right)}
\newcommand{\pd}[2]{\frac{\partial#1}{\partial#2}}
\newcommand{\ud}[1]{\,\mathrm{d}#1}
\renewcommand{\vec}[1]{\mathbf{#1}}
\begin{document}
\parskip.9ex

\title[Comparative Accuracy of a Data-Fitted Generalized Aw-Rascle-Zhang Model]
{Comparative Model Accuracy of a Data-Fitted Generalized Aw-Rascle-Zhang Model}
\author[S. Fan]{Shimao Fan}
\address[Shimao Fan]
{Department of Civil and Environmental Engineering \\ \newline
University of Illinois at Urbana Champaign \\
205 N. Mathews Ave \\ Urbana, IL 61801}
\email{shimao@illinois.edu}
\author[M. Herty]{Michael Herty}
\address[Michael Herty]
{Department of Mathematics \\ RWTH Aachen University \\ \newline
Templergraben 55 \\ D-52056 Aachen}
\email{herty@mathc.rwth-aachen.de}
\urladdr{http://www.opt.rwth-aachen.de/herty}
\author[B. Seibold]{Benjamin Seibold}
\address[Benjamin Seibold]
{Department of Mathematics \\ Temple University \\ \newline
1805 North Broad Street \\ Philadelphia, PA 19122}
\email{seibold@temple.edu}
\urladdr{http://www.math.temple.edu/\~{}seibold}
\subjclass[2000]{35L65; 35Q91; 91B74}

\keywords{traffic model, Lighthill-Whitham-Richards, Aw-Rascle-Zhang, generalized, second order, fundamental diagram, trajectory, sensor, data, validation}
\begin{abstract}
The Aw-Rascle-Zhang (ARZ) model can be interpreted as a generalization of the Lighthill-Whitham-Richards (LWR) model, possessing a family of fundamental diagram curves, each of which represents a class of drivers with a different empty road velocity. A  weakness of this approach is that different drivers possess vastly different densities at which traffic flow stagnates. This drawback can be overcome by modifying the pressure relation in the ARZ model, leading to the generalized Aw-Rascle-Zhang (GARZ) model. We present an approach to determine the parameter functions of the GARZ model from fundamental diagram measurement data. The predictive accuracy of the resulting data-fitted GARZ model is compared to other traffic models by means of a three-detector test setup, employing two types of data: vehicle trajectory data, and sensor data. This work also considers the extension of the ARZ and the GARZ models to models with a relaxation term, and conducts an investigation of the optimal relaxation time.
\end{abstract}

\maketitle

\section{Introduction}
The mathematical modeling of vehicular traffic flow knows a variety of types of descriptions (see \cite{Helbing2001, BellomoDogbe2011} for review papers): \emph{microscopic} (e.g., \cite{Pipes1953, Newell1961, BandoHesebeNakayama1995, Helbing1995}), which model the individual vehicles and their interactions by ODE; \emph{cellular} (e.g., \cite{NagelSchreckenberg1992, FukuiIshibashi1996, Daganzo2006, SakaiNishinariIIda2006, AlperovichSopasakis2008}), which divide the road into cells and prescribe stochastic rules how vehicles advance through cells; and \emph{continuum}. This latter class divides into \emph{mesoscopic/gas-kinetic} (e.g., \cite{HermanPrigogine1971, Phillips1979, KlarWegener2000, IllnerKlarMaterne2003, HertyPareschi2010, HertyIllner2010}) and \emph{macroscopic} (e.g., \cite{LighthillWhitham1955, Richards1956, Underwood1961, Payne1971, Payne1979, Lebacque1993, KernerKonhauser1993, KernerKonhauser1994, Daganzo1995, AwRascle2000, Daganzo2006, GaravelloPiccoli2006, Goatin2006, BerthelinDegondDelitalaRascle2008, BayenClaudel2011}), i.e., fluid-dynamical, models. Among the (inviscid) macroscopic models one distinguishes between first-order models based on scalar hyperbolic equations and second-order models comprised of systems of hyperbolic equations. Specific examples for the latter are the Payne-Whitham model \cite{Payne1971, Whitham1974}, two-phase models \cite{ColomboGoatin2006}, and the Aw-Rascle-Zhang model \cite{AwRascle2000, Greenberg2001, Zhang2002, BerthelinDegondDelitalaRascle2008}.

All of these types of traffic models are of practical importance, however---due to their different mathematical structure---for different purposes. For instance, microscopic models are well-suited for traffic simulation, i.e., the ``in silico'' study of a specific scenario; cellular models reproduce jamming behavior while being simple to implement and easy to parallelize; mesoscopic models provide a high modeling flexibility; and macroscopic models provide a suitable framework for the incorporation of on-line data. Moreover, there are mathematical relations between these types of models. For example, microscopic models as well as cellular models converge to mesoscopic or macroscopic models in the limit of vanishing cell size or vehicle spacing, respectively \cite{Daganzo1994, Daganzo2006, AlperovichSopasakis2008, BorscheKimathiKlar2012}. Similarly, macroscopic models arise as suitable limits of mesoscopic models \cite{NelsonSopasakis1999, KlarWegener2000}.

Microscopic and cellular models are nowadays widely used in traffic engineering, and their combination with data is ubiquitous. In contrast, certain types of continuum models have been studied mathematically, but very little work has been conducted on their validation with traffic data. Examples of macroscopic first-order models used in traffic engineering practice are the Mobile Century project \cite{Aminetal2008} and the Mobile Millennium project \cite{MobileMillennium}, including approaches based on the reformulation of the Lighthill-Whitham-Richards model \eqref{eq:lighthill_whitham_richards_model} in terms of Hamilton-Jacobi equations \cite{BayenClaudel2010, BayenClaudel2011}, or in terms of the velocity variable \cite{WorkBlandinTossavainenPiccoliBayen2010}. Further examples are \cite{BlandinBrettiCutoloPiccoli2009, BlandinCoqueBayen2012} and the references therein. Those projects focus on the assimilation of data, the reconstruction of traffic states (e.g., from cell phone data), and the combination of macroscopic models with filtering techniques.

In contrast, here different types of traffic models are generated via historic data, and then their predictive accuracy is investigated using time-dependent data. Moreover, the main focus lies on second-order models; and first-order models are considered mainly for comparison purposes. The contributions of this paper are: (i) the design of a generalized Aw-Rascle-Zhang model and the analysis of its mathematical properties; (ii) a systematic methodology to construct data-fitted first-order and second-order traffic models, using historic fundamental diagram data; (iii) the validation of first- and second-order macroscopic traffic models via time-dependent trajectory and sensor data, and the comparison of the predictive accuracy of different models; and (iv) the investigation of the optimal relaxation time in second-order models with a relaxation term.

This paper is organized as follows. In \S\ref{sec:models} an overview over existing macroscopic traffic models is provided, including a discussion of some of the modeling shortcomings of the Aw-Rascle-Zhang model. We then introduce the generalized Aw-Rascle-Zhang model as an approach that addresses the shortcomings, and discuss its mathematical properties. The fitting of the model parameters and functions is then described in \S\ref{sec:data-fitted_models}. Given historic fundamental diagram data in the flow rate vs.\ density plane, we systematically construct data-fitted first- and second-order macroscopic models. In \S\ref{sec:numerical_methods} the numerical methods used to conduct the model validation and comparison are presented. Unlike studies of cell-transmission models \cite{Daganzo1994}, in this paper all studies are carried out in a macroscopic sense; in particular the governing PDE are numerically solved with high enough accuracy such that the numerical approximation errors are negligibly small relative to the model errors. The comparison of the models on a three-detector test setup \cite{Daganzo1997} is then carried out in \S\ref{sec:validation}. In addition to the macroscopic traffic models, we also consider a predictor that simply interpolates the traffic state from the boundaries. For vehicle trajectory data, and for sensor data, the predictive accuracies of the models are compared, and their reproduction of features in the traffic states are studied. In \S\ref{sec:inhomogeneous_models} we then extend the studies to data-fitted second-order models with relaxation terms. In particular, we study the dependence of the model accuracy on the relaxation time at which drivers adjust their driving behavior. Finally, in \S\ref{sec:conclusions} we present the conclusions from our studies.

\vspace{1.5em}
\section{Existing and New Macroscopic Traffic Models}
\label{sec:models}
Common to all macroscopic traffic models is the continuity equation
\begin{equation}
\label{eq:continuity_equation}
\rho_t+(\rho u)_x = 0\;,
\end{equation}
which gives the conservation of vehicles. In \eqref{eq:continuity_equation}, the vehicle density is $\rho(x,t)$, and the vehicle velocity field is $u(x,t)$, where $x$ is the position along the road, and $t$ is time. If the road has multiple lanes (in a given direction), we consider these aggregated into the scalar field quantities $\rho$ and $u$.

\subsection{The Lighthill-Whitham-Richards Model and Flow Rate Functions}
The simplest macroscopic traffic model, the Lighthill-Whitham-Richards (LWR) model \cite{LighthillWhitham1955, Richards1956}, is obtained by assuming a functional relationship between $\rho$ and $u$, i.e., $u = U(\rho)$. This turns equation \eqref{eq:continuity_equation} into a scalar hyperbolic conservation law
\begin{equation}
\label{eq:lighthill_whitham_richards_model}
\rho_t+(Q(\rho))_x = 0\;,
\end{equation}
where the flux $Q$ is given by the flow rate function $Q(\rho) = \rho U(\rho)$. Because the LWR model \eqref{eq:lighthill_whitham_richards_model} is a closed model consisting of a single equation, it is denoted a \emph{first order model}. The velocity function $U(\rho)$ is commonly assumed to be decreasing in $\rho$ with $U(\rho_\text{max}) = 0$ for some maximal vehicle density $\rho_\text{max}>0$.

Popular examples of flow rate functions are the Greenshields flux \cite{Greenshields1935}, in which $Q(\rho)$ is a quadratic function, and the Newell-Daganzo flux \cite{Newell1993, Daganzo1994}, in which $Q(\rho)$ is a piecewise linear function. While these different choices of functions $Q(\rho)$ lead to well-posed first-order models, the second-order models derived below call for further properties that the function $Q(\rho)$ must satisfy. In particular, the velocity function $U(\rho) = Q(\rho)/\rho$ must nowhere be constant, because otherwise hyperbolicity would be lost (see the analysis in \S\ref{subsubsec:characteristics}). This rules out the Newell-Daganzo flux; and as a consequence, in this paper we consider flow rate functions \eqref{eq:flow_rate_curve} that resemble the shape of the Newell-Daganzo flux, but that are strictly concave.

\subsection{The Aw-Rascle-Zhang Model}
The strict functional relationship between $\rho$ and $u$ is loosened in \emph{second order models}, which augment \eqref{eq:continuity_equation} by an evolution equation for the velocity field. Payne and Whitham proposed a model \cite{Payne1971, Whitham1974} in which the vehicle velocity relaxes towards a velocity function, while also being affected by a ``traffic pressure'' that is analogous to a pressure in fluid dynamics. Because this pressure can lead to unrealistic solutions (vehicles going backwards on the road, shocks that overtake vehicles, etc., see \cite{Daganzo1995}), Aw and Rascle \cite{AwRascle2000}, and independently Zhang \cite{Zhang2002}, proposed a different form of ``pressure'' that remedies the shortcomings of the Payne-Whitham model. The homogeneous Aw-Rascle-Zhang (ARZ) model reads as
\begin{equation}
\label{eq:aw_rascle_zhang_model_homogeneous}
\begin{split}
\rho_t+(\rho u)_x &= 0\;, \\
(u+h(\rho))_t+u(u+h(\rho))_x &= 0\;,
\end{split}
\end{equation}
where we call $h(\rho)$ the \emph{hesitation function}.\footnote{The function $h(\rho)$ is sometimes called ``pressure'', and denoted $p(\rho)$, even though it does not play the role of a pressure in the equations.} We assume that $h'(\rho)>0$ and use the gauge $h(0) = 0$. The addition of a relaxation term (analogous to the Payne-Whitham model) yields the inhomogeneous ARZ model \cite{Greenberg2001, Rascle2002}
\begin{equation}
\label{eq:aw_rascle_zhang_model_inhomogeneous}
\begin{split}
\rho_t+(\rho u)_x &= 0\;, \\
(u+h(\rho))_t+u(u+h(\rho))_x &= \tfrac{1}{\tau}\prn{U_\text{eq}(\rho)-u}\;.
\end{split}
\end{equation}
We call $U_\text{eq}(\rho)$ the \emph{desired velocity function} or the \emph{equilibrium velocity function}, and $\tau$ the \emph{relaxation time scale}.

As one can easily verify, the homogeneous ARZ model \eqref{eq:aw_rascle_zhang_model_homogeneous} possesses no mechanism to make drivers move when starting with all vehicles at rest, i.e., $u(x,0) = 0$. In turn, the inhomogeneous ARZ model \eqref{eq:aw_rascle_zhang_model_inhomogeneous} does. We therefore expect the homogeneous ARZ model to yields reasonable results only when the traffic flow is close to its equilibrium state, i.e., $u\approx U_\text{eq}(\rho)$. Yet, in general the inhomogeneous ARZ model has the potential to yield more realistic predictions.

\begin{rem}
The conservative form of \eqref{eq:aw_rascle_zhang_model_inhomogeneous} is given by
\begin{equation}
\label{eq:aw_rascle_zhang_model_conservative}
\begin{split}
\rho_t+\prn{q-\rho h(\rho)}_x &= 0\;, \\
q_t+\prn{\tfrac{q^2}{\rho}-h(\rho)q}_x &= \tfrac{1}{\tau}\prn{Q_\text{eq}(\rho)+\rho h(\rho)-q}\;,
\end{split}
\end{equation}
where the two conserved variables are $\rho$ and $q = \rho (u+h(\rho))$, and $Q_\text{eq}(\rho) = \rho U_\text{eq}(\rho)$ is called the \emph{equilibrium curve} that the momentum density $\rho u$ relaxes to.
\end{rem}

\begin{rem}
\label{rem:dependent_h_U}
Various authors (e.g., \cite{Greenberg2001, SiebelMauser2006}) have proposed to choose the functions $h(\rho)$ and $U_\text{eq}(\rho)$ dependent on each other, namely $h(\rho) = U_\text{eq}(0)-U_\text{eq}(\rho)$. In this case, the solution relaxes towards an equilibrium state $u = U_\text{eq}(\rho)$ (see \S\ref{subsubsec:relaxtion_GARZ_LWR}). In this paper, we follow this philosophy, since it generates both functions from the same data-fitting procedure. However, it should be noted that it is in principle perfectly reasonable to choose $h(\rho)$ and $U_\text{eq}(\rho)$ independently of each other. As shown in \cite{Greenberg2004, FlynnKasimovNaveRosalesSeibold2009, SeiboldFlynnKasimovRosales2013, KasimovRosalesSeiboldFlynn2013}, such a choice can generate (whenever $h'(\rho)+U_\text{eq}'(\rho)<0$) instabilities and self-sustaining traveling wave solutions that model phantom traffic jams and traffic waves, respectively.
\end{rem}

\subsection{Interpretation of ARZ as Generalization of LWR}
\label{subsec:ARZ_generalization_of_LWR}
As pointed out in \cite{Lebacque1993, AwRascle2000, BerthelinDegondDelitalaRascle2008, FanSeibold2013}, the homogeneous ARZ model \eqref{eq:aw_rascle_zhang_model_homogeneous} can be interpreted as a generalization of the LWR model, by introducing the variable $w = u+h(\rho)$. Thus, system \eqref{eq:aw_rascle_zhang_model_homogeneous} takes the form
\begin{equation}
\label{eq:aw_rascle_zhang_model_w}
\begin{split}
\rho_t+(\rho u)_x &= 0\;, \\
w_t+uw_x &= 0\;, \\
\text{where~}u &= w-h(\rho)\;.
\end{split}
\end{equation}
The interpretation of \eqref{eq:aw_rascle_zhang_model_w} is that $w$ is advected with the flow $u$, i.e., it moves with the vehicles. One can therefore interpret $w$ as a quantity that is associated with each vehicle, and that is influencing the velocity. Since for $\rho = 0$, we have $h(0) = 0$ and thus $w = u$, we call $w$ the \emph{empty road velocity}. The actual vehicle velocity is given by its empty road velocity, reduced by the hesitation $h(\rho)$.

Using this interpretation, the homogeneous ARZ model generalizes the LWR model \eqref{eq:lighthill_whitham_richards_model}, as follows. Given a (decreasing) LWR velocity function $U(\rho)$, we define $h(\rho) = U(0)-U(\rho)$ (clearly, $h'(\rho)>0$ and $h(0) = 0$). Then, model \eqref{eq:aw_rascle_zhang_model_w} possesses a one-parameter family of velocity curves, namely $u_w(\rho) = w-h(\rho) = U(\rho)+(w-U(0))$, and the LWR velocity curve $u(\rho) = U(\rho)$ is one of them, namely the one corresponding to $w = U(0)$. The same behavior translates to the flow rate curves that live in the fundamental diagram ($Q$~vs.~$\rho$). The ARZ model \eqref{eq:aw_rascle_zhang_model_w} possesses a one-parameter family of flow rate curves, namely $Q_w(\rho) = Q(\rho)+\rho(w-U(0))$, and the LWR flow rate curve $Q(\rho)$ is one of them, namely the one corresponding to $w = U(0)$. This has been observed by Lebacque \cite{Lebacque1993}.

The aforementioned relationship between the LWR and the ARZ model is shown in Fig.~\ref{fig:fd_lwr_arz}. The single velocity curve (left panel) and flow rate curve (right panel), shown in red, is one representative of the family of curves, shown in black, that the ARZ model possesses. By construction, each velocity curve in the ARZ model is merely a vertical translation of the LWR velocity curve. Hence, the homogeneous ARZ model possesses the same number of parameter functions (namely: a single one) as the LWR model.

In line with Remark~\ref{rem:dependent_h_U}, i.e., by choosing $U_\text{eq}(\rho) = U(\rho) = U(0)-h(\rho)$, we can extend model \eqref{eq:aw_rascle_zhang_model_w} to the inhomogeneous case, yielding
\begin{equation}
\label{eq:aw_rascle_zhang_model_w_inhomogeneous}
\begin{split}
\rho_t+(\rho u)_x &= 0\;, \\
w_t+uw_x &= \tfrac{1}{\tau}\prn{U(0)-w}\;, \\
\text{where~}u &= w-h(\rho)\;,
\end{split}
\end{equation}
which adds a temporal relaxation of each vehicle's empty road velocity $w$ towards a uniform value $U(0)$. In other words: the dynamics, that can be on any velocity curve $u_w(\rho)$, are driven towards the LWR velocity function $U(\rho)$, i.e., the red curves in Fig.~\ref{fig:fd_lwr_arz}---unless the system is driven away from equilibrium by another effect, such as boundary conditions (see \S\ref{subsubsec:relaxtion_GARZ_LWR}).

\subsection{Generalized ARZ Model}
\label{subsec:garz}
The interpretation of the ARZ model \eqref{eq:aw_rascle_zhang_model_homogeneous} as possessing a family of velocity curves (see \eqref{eq:aw_rascle_zhang_model_w}) reveals a fundamental shortcoming of the model: due to the additive relationship between velocity, empty road velocity, and hesitation, there is not a unique maximum density $\rho_\text{max}$, at which the flow stagnates. On the contrary, as the plots in Fig.~\ref{fig:fd_lwr_arz} indicate, variations in $w$ can lead to significant variations in the density at which $u_w(\rho) = 0$. However, since in reality the maximum density is largely a property of the road, it should not depend (at least not strongly) on the velocity that drivers assume when alone on the road. In order to remedy this shortcoming, the relationship between $u$, $w$, and $\rho$ must be generalized.

To that end, we consider a generalized Aw-Rascle-Zhang (GARZ) model, which is a representative of the class of generic second order models (GSOM), proposed by Lebacque, Mammar, and Haj-Salem \cite{LebacqueMammarHajSalem2007}. Specifically, the homogeneous ARZ model \eqref{eq:aw_rascle_zhang_model_w} generalizes to
\begin{equation}
\label{eq:GARZ_model}
\begin{split}
\rho_t+(\rho u)_x &= 0\;, \\
w_t+uw_x &= 0\;, \\
\text{where~}u &= V(\rho,w)\;,
\end{split}
\end{equation}
where we impose the following requirements on the velocity function $V(\rho,w)$, and the associated generalized flow rate function $Q(\rho,w) = \rho V(\rho,w)$:
\begin{itemize}
\item $V(\rho,w)\ge 0$, i.e., vehicles never go backwards on the road.
\item $V(0,w) = w$, i.e., we gauge the convected quantity $w$ to play the role of the empty road velocity, as in the ARZ model \eqref{eq:aw_rascle_zhang_model_w}.
\item $\frac{\partial^2 Q}{\partial \rho^2}(\rho,w) < 0$ for $w > 0$, i.e., each flow-rate curve $Q_w(\rho) = Q(\rho,w)$ is strictly concave. This condition implies (see Lemma~\ref{lem:concave_Q_decreasing_U}) in particular that $\pd{V}{\rho}(\rho,w) < 0$, i.e., each velocity curve $u_w(\rho) = V(\rho,w)$ is strictly decreasing w.r.t.\ the density.
\item $\pd{V}{w}(\rho,w) > 0$, i.e., a faster empty road velocity results in a faster velocity for all possible densities.
\item $V(\rho,0) = 0$, i.e., for $w = 0$, the concavity of $Q$ and the slope of $V$ hold with an equality sign.
\end{itemize}
\begin{lem}
\label{lem:concave_Q_decreasing_U}
Consider a $C^2$ function $U(\rho)$, and let $Q(\rho) = \rho U(\rho)$. If $Q''(\rho)<0$ everywhere, then $U'(\rho)<0$ everywhere.
\end{lem}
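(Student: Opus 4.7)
The plan is to reduce the claim $U'(\rho)<0$ to the given sign condition $Q''(\rho)<0$ by finding an antiderivative-type identity that converts the second derivative of $Q$ into the first derivative of $U$ (weighted by a nonnegative factor). Differentiating $Q(\rho)=\rho U(\rho)$ twice gives
\begin{equation*}
Q''(\rho)=2U'(\rho)+\rho U''(\rho).
\end{equation*}
The key observation is that this expression is exactly $\rho^{-1}$ times the derivative of $\rho^{2}U'(\rho)$. More precisely, I would verify
\begin{equation*}
\frac{d}{d\rho}\bigl(\rho^{2}U'(\rho)\bigr)=2\rho U'(\rho)+\rho^{2}U''(\rho)=\rho\,Q''(\rho).
\end{equation*}

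Next I would separate the argument into the interior case $\rho>0$ and the boundary case $\rho=0$ (recalling that in this paper $\rho$ lives on the nonnegative real line). For $\rho>0$, integrating the identity above from $0$ to $\rho$, and using that $\rho^{2}U'(\rho)$ vanishes at the origin, yields
\begin{equation*}
\rho^{2}U'(\rho)=\int_{0}^{\rho}sQ''(s)\,ds.
\end{equation*}
Since $s\ge 0$ and $Q''(s)<0$ on the interval of integration, and the integrand is strictly negative away from $s=0$, the right-hand side is strictly negative. Dividing by $\rho^{2}>0$ gives $U'(\rho)<0$.

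For $\rho=0$, the identity $Q''(\rho)=2U'(\rho)+\rho U''(\rho)$ evaluated at $\rho=0$ reduces to $Q''(0)=2U'(0)$, and the hypothesis $Q''(0)<0$ forces $U'(0)<0$ directly. Combining the two cases completes the proof. The main subtlety (rather than a genuine obstacle) is simply spotting the antiderivative trick with the weight $\rho^{2}$; once that is in hand, the argument is a one-line integration together with handling the boundary point $\rho=0$ separately. No additional regularity beyond the assumed $C^{2}$ smoothness of $U$ is needed.
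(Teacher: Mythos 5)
Your proof is correct and is essentially the paper's argument in disguise: the quantity $\rho^2 U'(\rho)$ you integrate is exactly $\rho Q'(\rho)-Q(\rho)$, i.e.\ the negative of the auxiliary function $a(\rho)=Q(\rho)-Q'(\rho)\rho$ that the paper uses, and both proofs conclude by noting this function vanishes at the origin and is monotone because its derivative equals $\pm\rho Q''(\rho)$. Your explicit treatment of the boundary point $\rho=0$ via $Q''(0)=2U'(0)$ is a small but welcome addition that the paper's one-line proof leaves implicit.
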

\begin{proof}
The function $a(\rho) = Q(\rho)-Q'(\rho)\rho$ satisfies: (i) $a(0) = 0$, and (ii) $a'(\rho) = -Q''(\rho)\rho > 0$ everywhere. Hence, $U'(\rho) = \frac{-a(\rho)}{\rho^2} < 0$ everywhere.
\end{proof}

In order to define an inhomogeneous GARZ model, an equilibrium velocity curve must be specified. We assume that it is a member of the family of velocity curves defined by $V$, i.e.
\begin{equation*}
U_\text{eq}(\rho) = V(\rho,w_\text{eq})\;,
\end{equation*}
for some equilibrium empty road velocity $w_\text{eq}$. We choose to generalize \eqref{eq:aw_rascle_zhang_model_inhomogeneous} to the GARZ case as follows:
\begin{equation}
\label{eq:GARZ_model_inhomogeneous}
\begin{split}
\rho_t+(\rho u)_x &= 0\;, \\
w_t+uw_x &= \tfrac{1}{\tau}\prn{U_\text{eq}(\rho)-u}\;, \\
\text{where~}u &= V(\rho,w)\;.
\end{split}
\end{equation}
Note that it would alternatively be conceivable to propose a relaxation in \eqref{eq:GARZ_model_inhomogeneous} of the form
\begin{equation}
\label{eq:alternative_relaxation}
w_t+uw_x = \tfrac{1}{\tau}\prn{w_\text{eq}-w}\;.
\end{equation}
While for the ARZ model, the forms \eqref{eq:aw_rascle_zhang_model_inhomogeneous} and \eqref{eq:aw_rascle_zhang_model_w_inhomogeneous} are equivalent, for the GARZ model the relaxations \eqref{eq:GARZ_model_inhomogeneous} and \eqref{eq:alternative_relaxation} are not. Specifically, if $\pd{V}{w}(\rho,w_\text{eq})>0$, then both forms relax to the same limit, but at different rates. A simple Taylor expansion yields that for $w$ nearby $w_\text{eq}$, the relaxation in \eqref{eq:GARZ_model_inhomogeneous} happens $\pd{V}{w}(\rho,w_\text{eq})$ times as rapidly as the relaxation in \eqref{eq:alternative_relaxation}.

Finally, in line with \eqref{eq:aw_rascle_zhang_model_conservative}, the GARZ model \eqref{eq:GARZ_model_inhomogeneous} is meant to be interpreted in the conservative form
\begin{equation}
\label{eq:GARZ_conservative}
\begin{split}
\rho_t+\prn{V(\rho,q/\rho)\rho}_x &= 0\;, \\
q_t+\prn{V(\rho,q/\rho)q}_x &= \tfrac{1}{\tau}\prn{Q_\text{eq}(\rho)-Q(\rho,q/\rho)}\;,
\end{split}
\end{equation}
where the two conserved variables are $\rho$ and $q = \rho w$. Moreover, $Q_\text{eq}(\rho) = \rho U_\text{eq}(\rho)$ and $Q(\rho,w) = \rho V(\rho,w)$.

\subsection{Properties of the GARZ Model}
Most properties of the classical ARZ model transfer over to its generalization, the GARZ model. Here we only collect relevant results, many of which have been presented in \cite{LebacqueMammarHajSalem2007}, or that are relatively straightforward generalizations of the results given in \cite{AwRascle2000, Greenberg2001, Rascle2002}. The theoretical results presented below are in particular important for the data-fitting methodologies conducted in \S\ref{sec:data-fitted_models}, and for the interpretation of the results obtained in \S\ref{sec:inhomogeneous_models}.

\subsubsection{Regions of GARZ Variables and Inverse Velocity Functions}
\label{subsubsec:regions}
Because of their relations and because of their physical meaning, the quantities $\rho$, $w$, and $u$ cannot assume any arbitrary values. In this paper, we assume that there is a unique stagnation density $\rho_\text{max}$, at which vehicles come to a stop, independent of their empty road velocity, i.e., $V(\rho_\text{max},w) = 0$ for all $w$. We therefore have $\rho\in [0,\rho_\text{max})$ and $0 < u\le w$, where the latter inequality follows from $\pd{V}{\rho}(\rho,w) < 0$. Moreover, we assume that there is a minimum and maximum empty road velocity, i.e., $0<w_\text{min}\le w\le w_\text{max}$. Because $\pd{V}{\rho} < 0$ and $\pd{V}{w} > 0$, the function $V(\rho,w)$ can be ``inverted'' to define the functions (and their domains):
\begin{align*}
V:\mathcal{D}_V \longrightarrow [0,w_\text{max}]
&\;\,\text{where~~}
\mathcal{D}_V = \{(\rho,w)\;|\; 0\le\rho<\rho_\text{max},\;
w_\text{min}\le w\le w_\text{max}\}\;, \\
R:\mathcal{D}_R \longrightarrow [0,\rho_\text{max})
&\;\,\text{where~~}
\mathcal{D}_R = \{(u,w)\;|\; 0<u\le w,\; w_\text{min}\le w\le w_\text{max}\}\;, \\
W:\mathcal{D}_W\! \longrightarrow [w_\text{min},w_\text{max}]
&\;\,\text{where~~}
\mathcal{D}_W\! = \{(\rho,u)\;|\; 0\le\rho<\rho_\text{max},\,
V(\rho,w_\text{min})\le u\le V(\rho,w_\text{max})\}\;.
\end{align*}
Here the functions $R$ and $W$ are defined as the unique solutions to the problems:
\begin{enumerate}[\quad a)]
\item given $u$ and $w$, find $\rho = R(u,w)$, s.t.~$V(\rho,w) = u$;
\item given $\rho$ and $u$, find $w = W(\rho,u)$, s.t.~$V(\rho,w) = u$.
\end{enumerate}
From the fact that the quantity $w$ is transported with the flow (while possibly relaxing to some $w_\text{eq} \in (w_\text{min},w_\text{max})$), and from the solution of the Riemann problems of the GARZ model (see below), it follows that the dynamics of the model never generate values $w\notin [w_\text{min},w_\text{max}]$ or $\rho\notin [0,\rho_\text{max})$. Hence, analogous to the ARZ model (cf.~\cite{AwRascle2000}), the domain $\mathcal{D}_V$ is an invariant region.

\subsubsection{Characteristics and Associated Fields}
\label{subsubsec:characteristics}
The homogeneous part of the GARZ model \eqref{eq:GARZ_conservative} is a conservation law of the form
\begin{equation*}
\vec{U}_t + \vec{F}(\vec{U})_x = 0\;,
\end{equation*}
where
\begin{equation*}
\vec{U} = \begin{pmatrix} \rho \\ q \end{pmatrix}
\quad\text{and}\quad
\vec{F}(\vec{U}) = \begin{pmatrix} u\rho \\ uq \end{pmatrix}\;,
\quad\text{where}\quad
u = V(\rho,q/\rho)\;.
\end{equation*}
The Jacobian of the flux function $\vec{F}(\vec{U})$ is
\begin{equation*}
\nabla\vec{F}(\vec{U})
= \begin{pmatrix} u+\rho\pd{u}{\rho} & \rho\pd{u}{q} \\
q\pd{u}{\rho} & u+q\pd{u}{q} \end{pmatrix}\;,
\end{equation*}
and its eigenvalues and associated eigenvectors are
\begin{align*}
\lambda^{(1)} &= u+\rho\pd{u}{\rho}+q\pd{u}{q} = u+\rho\pd{V}{\rho}
\quad\text{with}\quad
\vec{\gamma}^{(1)} = \begin{pmatrix} \rho \\ q \end{pmatrix}
\quad\text{and thus}\quad
\nabla\lambda^{(1)}\cdot \vec{\gamma}^{(1)} \neq 0\;,
\intertext{and}
\lambda^{(2)} &= u \hspace{10em}
\quad\text{with}\quad
\vec{\gamma}^{(2)}
= \begin{pmatrix} -\pd{u}{q} \\ \phantom{-}\pd{u}{\rho} \end{pmatrix}
\quad\text{and thus}\quad
\nabla\lambda^{(2)}\cdot \vec{\gamma}^{(2)} = 0\;.
\end{align*}
Hence, like the ARZ model \eqref{eq:aw_rascle_zhang_model_conservative}, the GARZ model \eqref{eq:GARZ_conservative} is strictly hyperbolic for $\rho>0$. One of its characteristic velocities, $\lambda^{(1)}$, is slower than the vehicles (i.e., $\lambda^{(1)}<u$) and its associated field is genuinely nonlinear (i.e., it corresponds to shocks and rarefaction waves, see below). Its other characteristic velocity $\lambda^{(2)}$ equals the vehicle velocity and its associated characteristic field is linearly degenerate (i.e., its associated waves are contact discontinuities that are transported with the flow).
\begin{lem}
\label{lem:lambdas_decreasing}
Both characteristic velocities are strictly decreasing w.r.t.\ $\rho$, i.e., $\pd{\lambda^{(1)}}{\rho} < 0$ and $\pd{\lambda^{(2)}}{\rho} < 0$.
\end{lem}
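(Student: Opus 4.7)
The plan is to reduce both monotonicity statements to the assumptions on $V$ and $Q$ listed in \S\ref{subsec:garz}, and in particular to the strict concavity of $Q(\,\cdot\,,w)$ for $w>0$ (combined with Lemma~\ref{lem:concave_Q_decreasing_U}). The key observation is that the two characteristic velocities can be rewritten in terms of the single-variable function $\rho \mapsto Q(\rho,w)$ at fixed $w$: namely $\lambda^{(2)} = V(\rho,w) = Q(\rho,w)/\rho$ and $\lambda^{(1)} = V(\rho,w) + \rho\,\pd{V}{\rho}(\rho,w) = \pd{Q}{\rho}(\rho,w)$. Once this is recognized, the result follows by differentiating once more with respect to $\rho$ at fixed $w$.

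First I would handle $\lambda^{(2)}$. Since $w\ge w_\text{min}>0$ on the admissible region, the assumption $Q_{\rho\rho}(\rho,w)<0$ applies to the single-variable function $\rho\mapsto Q(\rho,w)$. Lemma~\ref{lem:concave_Q_decreasing_U}, applied to this function, yields $\pd{V}{\rho}(\rho,w)<0$, so that $\pd{\lambda^{(2)}}{\rho} = \pd{V}{\rho}(\rho,w)<0$.

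Next I would treat $\lambda^{(1)}$. Using $\lambda^{(1)} = V+\rho\,V_\rho = Q_\rho(\rho,w)$ and differentiating with respect to $\rho$ at fixed $w$ gives $\pd{\lambda^{(1)}}{\rho} = Q_{\rho\rho}(\rho,w)$, which is strictly negative by assumption. No further work is needed.

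The only subtlety worth flagging is what ``w.r.t.\ $\rho$'' means, since in the conservative form of \eqref{eq:GARZ_conservative} the natural independent variables are $(\rho,q)$, whereas the monotonicity statement is most naturally phrased at fixed $w$; the calculation above makes clear that the latter is the appropriate interpretation, and the needed step is then nothing more than reading off the assumptions on $Q$ and $V$. There is no real obstacle; the statement is essentially a direct corollary of the design requirements imposed on $V$ and $Q$.
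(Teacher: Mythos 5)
Your proof is correct and is essentially identical to the paper's: both identify $\lambda^{(1)}=\pd{Q}{\rho}$ so that $\pd{\lambda^{(1)}}{\rho}=\frac{\partial^2 Q}{\partial\rho^2}<0$, and both obtain $\pd{\lambda^{(2)}}{\rho}=\pd{V}{\rho}<0$ from Lemma~\ref{lem:concave_Q_decreasing_U}. The extra remark about differentiating at fixed $w$ is a reasonable clarification but does not change the argument.
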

\begin{proof}
We have that $\lambda^{(1)} = V+\rho\pd{V}{\rho} = \pd{Q}{\rho}$. Since $Q$ is assumed concave w.r.t.\ $\rho$, it follows that $\pd{\lambda^{(1)}}{\rho} = \frac{\partial^2 Q}{\partial \rho^2} < 0$. Moreover, $\pd{\lambda^{(2)}}{\rho} = \pd{V}{\rho} < 0$ by Lemma~\ref{lem:concave_Q_decreasing_U}.
\end{proof}
We continue with the discussion of the characteristic fields. The scalar function $I^{(1)} = q/\rho = w$ satisfies $\nabla I^{(1)}\cdot \vec{\gamma}^{(1)} = 0$, and it is a Riemann invariant to $\lambda^{(1)}$. Hence, across waves of the first family, the empty road velocity $w$ is constant. The field associated with the second eigenvalue $\lambda^{(2)}$ is linearly degenerate. It is given by $I^{(1)} = \lambda^{(2)} = u$ and across waves of the second family, the velocity $u$ is constant. The solution to a Riemann problem, i.e., the Cauchy problem to system \eqref{eq:GARZ_conservative} on the  real line with discontinuous piecewise constant initial data $\vec{U}(x,0) = (1-H(x))\vec{U}_\text{L}+H(x)\vec{U}_\text{R}$, where $H$ is the Heaviside function, generalizes naturally from the ARZ model as well. In general, the solution is obtained by superposition of simple waves connecting different constant states: from a given left state $\vec{U}_\text{L}$ to a given right state $\vec{U}_\text{R}$ via an intermediate state $\vec{U}_\text{M}$ that is connected to $\vec{U}_\text{L}$ by a 1-wave (i.e., a Lax-shock or rarefaction associated to the first characteristic field), and to $\vec{U}_\text{R}$ by a 2-wave (i.e., a contact discontinuity). Since the GARZ system \eqref{eq:GARZ_conservative} is---as the ARZ model---of Temple class \cite{Temple1983}, shocks and rarefaction wave curves in phase space coincide.  Moreover, due to Lemma~\ref{lem:lambdas_decreasing}, a simple wave of the first family is either a shock or a rarefaction wave.

In the phase space $(\rho,w)$-plane we discuss the shape of the characteristic fields. The second fields are parallel to the $\rho$-axis, and the first fields are the contours $V(\rho,w) = \text{const}$. Since by assumption $\pd{V}{\rho}<0$ and $\pd{V}{w}>0$, the contours of $V(\rho,w)$ always have a finite and truly positive slope in the $(\rho,w)$-plane. Thus, for any two states $(\rho_\text{L},w_\text{L})$ and $(\rho_\text{R},w_\text{R})$ that satisfy $w_\text{L}\ge u_\text{R}$, where $u_\text{R} = V(\rho_\text{R},w_\text{R})$, there is  a unique intermediate state $(\rho_\text{M},w_\text{M}) = (R(u_\text{R},w_\text{L}), w_\text{L})$, defined via the inverse function given in \S\ref{subsubsec:regions}. Moreover, because $\lambda^{(1)}$ is decreasing with $\rho$ (see Lemma~\ref{lem:lambdas_decreasing}), the Lax entropy conditions \cite{Evans1998} imply that for $\rho_\text{L}<\rho_\text{M}$ the 1-wave is a shock wave (moving with speed $s = \frac{\rho_\text{M}V(\rho_\text{M},w_\text{M}) - \rho_\text{L}V(\rho_\text{L},w_\text{L})}{\rho_\text{M}-\rho_\text{L}}$, given by the Rankine-Hugoniot conditions \cite{Evans1998}), while for $\rho_\text{L}>\rho_\text{M}$ it is a rarefaction wave. The condition $w_\text{L}\ge u_\text{R}$ means that drivers on the left wish to drive at least as fast as the vehicles on the right are driving. If this is not the case, i.e., if $w_\text{L} < u_\text{R}$, then there is no non--negative density at which the 1-wave and the 2-wave intersect. Here, a vacuum state will be generated, analogously to the construction for the ARZ model \cite{AwRascle2000, Rascle2002}. The left state $(\rho_\text{L},w_\text{L})$  is connected by a rarefaction wave to a left vacuum state $(0,w_\text{L})$; this state is connected to a right vacuum state $(0,u_\text{R})$ via another rarefaction (which is feasible because $w_\text{L} < u_\text{R}$); and this then connects to the right state $(\rho_\text{R},w_\text{R})$ via a 2-contact discontinuity.

\subsubsection{Relaxation of GARZ to LWR}
\label{subsubsec:relaxtion_GARZ_LWR}
Smooth solutions to the Cauchy problem of the \emph{inhomogeneous} GARZ model \eqref{eq:GARZ_model_inhomogeneous} relax in time towards solutions of the LWR model \eqref{eq:lighthill_whitham_richards_model}, because $w$ tends to $w_\text{eq}$ along characteristic curves. Note that for general relaxation systems, convergence to a first-order equation is only warranted if a sub-characteristic condition is satisfied, cf.~\cite{Liu1987, ChenLevermoreLiu1994, SeiboldFlynnKasimovRosales2013}. Here, we are in the characteristic case. Moreover, shocks of the $2\times 2$ hyperbolic system \eqref{eq:GARZ_conservative} are also shocks of the LWR model \eqref{eq:lighthill_whitham_richards_model}. Therefore, for the Cauchy problem, GARZ solutions converge to LWR solutions as $t\to\infty$, and/or as $\tau\to 0$.

\begin{figure}
\centering
\begin{minipage}[b]{.90\textwidth}
\centering
\includegraphics[width=.75\textwidth]{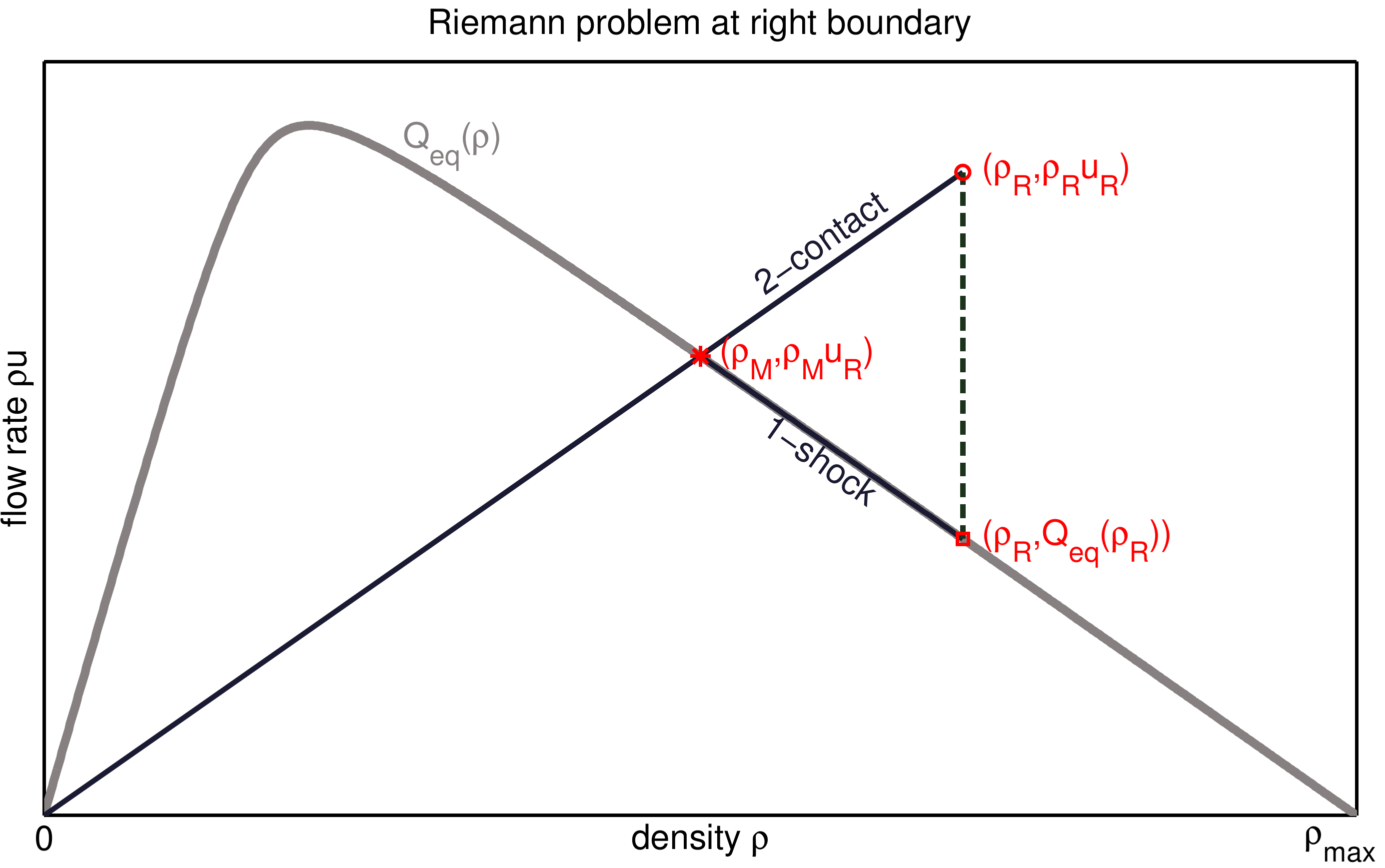}
\vspace{-.4em}
\caption{Riemann problem at right domain boundary. Consider a constant initial state $(\rho_\text{R},Q_\text{eq}(\rho_\text{R}))$ in the domain, and prescribed boundary data $(\rho_\text{R},\rho_\text{R}u_\text{R})$ with $u_\text{R} > U_\text{eq}(\rho_\text{R})$. The LWR model preserves the constant state (the boundary data is projected onto the equilibrium curve vertically along the dashed line). In contrast, the GARZ model generates a new state $(\rho_\text{M},Q_\text{eq}(\rho_\text{M}))$ at the boundary (projected onto the equilibrium curve along a ray through the origin), from which a shock moves into the domain, thus changing the initial state to the new boundary state.}
\label{fig:Riemann_problem_bc}
\end{minipage}
\end{figure}

In contrast, for initial-boundary-value problems on a bounded domain $x\in [x_\text{L},x_\text{R}]$, this last property is in general not true. To highlight this fact we consider a simplified setting depicted in Fig.~\ref{fig:Riemann_problem_bc}. Let constant boundary data $\rho(x_\text{L})$, $u(x_\text{L})$, $\rho(x_\text{R})$, $u(x_\text{R})$ be given. With this data, we can solve the LWR model \eqref{eq:lighthill_whitham_richards_model} and the inhomogeneous GARZ model \eqref{eq:GARZ_conservative}. Then, in general solutions to the latter problem do not converge to solutions of the former, even in the limit $\tau\to 0$. Consider a constant state $(\rho_\text{R},Q_\text{eq}(\rho_\text{R}))$ inside the domain. At the outflow boundary $x_\text{R}$, let a state $(\rho_\text{R},\rho_\text{R}u_\text{R})$ be given where $u_\text{R} > U_\text{eq}(\rho_\text{R})$. The LWR model only uses the density information, and thus the constant state is preserved. In contrast, the GARZ model uses the full state in the Riemann problem. Its solution yields an intermediate state $(\rho_\text{M},Q_\text{eq}(\rho_\text{M}))$, whose density is determined via the relation $\rho_\text{M}u_\text{R} = Q_\text{eq}(\rho_\text{M})$. This intermediate state connects to the boundary state via a contact discontinuity (a 2-wave) moving with speed $u_\text{R}$, and to the interior state via a shock (a 1-wave) that moves with speed
\begin{equation*}
s = \frac{Q_\text{eq}(\rho_\text{R})-Q_\text{eq}(\rho_\text{M})}
{\rho_\text{R}-\rho_\text{M}}\;,
\end{equation*}
which in the situation depicted in Fig.~\ref{fig:Riemann_problem_bc} moves \emph{into} the domain, since $s<0$. Thus, after some time in the GARZ model the intermediate state $(\rho_\text{M},\rho_\text{M}u_\text{R})$ is observed within the domain. Note that this argument holds independent of the value of the relaxation time $\tau$ in the model \eqref{eq:GARZ_model_inhomogeneous}.

\vspace{1.5em}
\section{Data-Fitted Traffic Models}
\label{sec:data-fitted_models}
In this section we describe how the parameter functions of the traffic models presented in \S\ref{sec:models} can be fitted to historic fundamental diagram data. We assume that flow rate vs.\ density pairs $(\rho_j,Q_j),\;j=1,\dots n$ are given from long-term measurements (commonly obtained via stationary sensors). As visible in the right panel of Fig.~\ref{fig:fd_lwr_arz}, these data (gray dots) tend to exhibit a relatively clear functional relationship between $\rho$ and $Q$ for low densities. In turn, for medium densities, a significant spread is visible, i.e., a single $\rho$-value corresponds to many different flow rates $Q$. Finally, for large densities, very few data points are available at all.

\subsection{Data-Fitting for the LWR and ARZ Models}
The first-order LWR model \eqref{eq:lighthill_whitham_richards_model} must represent these data via a single function $Q(\rho)$. As the spread of the data cannot be captured, it is reasonable to find a function that lies ``in the middle'' of the cloud of data points. Specifically, we employ the approach presented in \cite{FanSeibold2013}. First, since the stagnation density $\rho_\text{max}$ is not represented well via data, we prescribe it as a fixed constant, given by a typical vehicle length of 5 meters, plus 50\% of additional safety distance,
\begin{equation*}
\rho_\text{max} = \frac{\text{number of lanes}}
{\text{typical vehicle length}\times\text{safety distance factor}}
= \frac{\#\text{lanes}}{7.5\text{m}}\;.
\end{equation*}

Second, a three-parameter family of smooth and strictly concave flow rate curves is selected as
\begin{equation}
\label{eq:flow_rate_curve}
Q_{\alpha,\lambda,p}(\rho) = \alpha\prn{a+(b-a)\rho/\rho_\text{max}-\sqrt{1+y^2}}\;,
\end{equation}
where
\begin{align*}
a = \sqrt{1+\left(\lambda p\right)^2}\;, \quad
b = \sqrt{1+\left(\lambda(1-p)\right)^2}\;,\text{~and}\quad
y = \lambda \left(\rho/\rho_\text{max}-p\right).
\end{align*}
Each flow rate function $Q_{\alpha,\lambda,p}(\rho)$ in this family vanishes for $\rho = 0$ and $\rho = \rho_\text{max}$. The three free parameters allow for controlling three important features of $Q_{\alpha,\lambda,p}(\rho)$: the value of maximum flow rate $Q_\text{max}$ (mainly determined by $\alpha$), the critical density $\rho_\text{c}$ (mainly controlled by $p$), and the ``roundness" of the curve, i.e., how rapidly the slope transitions from positive to negative near $\rho_\text{c}$ (dominated by $\lambda$).

Third, from this three-parameter family of flow rate curves, the one is selected that is the closest to the data points $(\rho_j,Q_j),\;j=1,\dots n$ in a least-squares sense, i.e. we solve
\begin{equation}
\label{eq:LSQ}
\min_{\alpha,\lambda,p}\; \sum_{j=1}^n
(Q_{\alpha,\lambda,p}(\rho_{j})-Q_{j})^2\;.
\end{equation}
In the right panel of Fig.~\ref{fig:fd_lwr_arz}, the resulting least-squares fit to the given gray data points, called $Q_\text{eq}(\rho)$, is depicted by the red curve. The red curve in the left panel represents the resulting velocity function $U_\text{eq}(\rho) = Q_\text{eq}(\rho)/\rho$.

As described in \S\ref{subsec:ARZ_generalization_of_LWR}, the ARZ model \eqref{eq:aw_rascle_zhang_model_w} generalizes the LWR model to a one-parameter family of velocity curves $u_w(\rho) = U_\text{eq}(\rho)+(w-U_\text{eq}(0))$ (black curves in the left panel of Fig.~\ref{fig:fd_lwr_arz}) and flow rate curves $Q_w(\rho) = Q_\text{eq}(\rho)+(w-U_\text{eq}(0))\rho$ (right panel of Fig.~\ref{fig:fd_lwr_arz}). Due to this property, the ARZ model possesses the same amount data-fitted parameters as the LWR model.

An interpretation of the ARZ family of velocity curves is that different $w$-values represent different types of drivers; the larger $w$, the faster the corresponding drivers tend to drive. As motivated in \S\ref{subsec:garz}, this captures the spread in the fundamental diagram (which is desirable), but it also results in vastly varying stagnation densities for different types of drivers. This last property is unrealistic, as the maximum density is a property of the road, rather than of the behavior of drivers \cite{FanSeibold2013}. We therefore need to construct a family of curves that are not simple shifts of each other, as done below.

\begin{figure}
\begin{minipage}[b]{.480\textwidth}
\includegraphics[width=\textwidth]{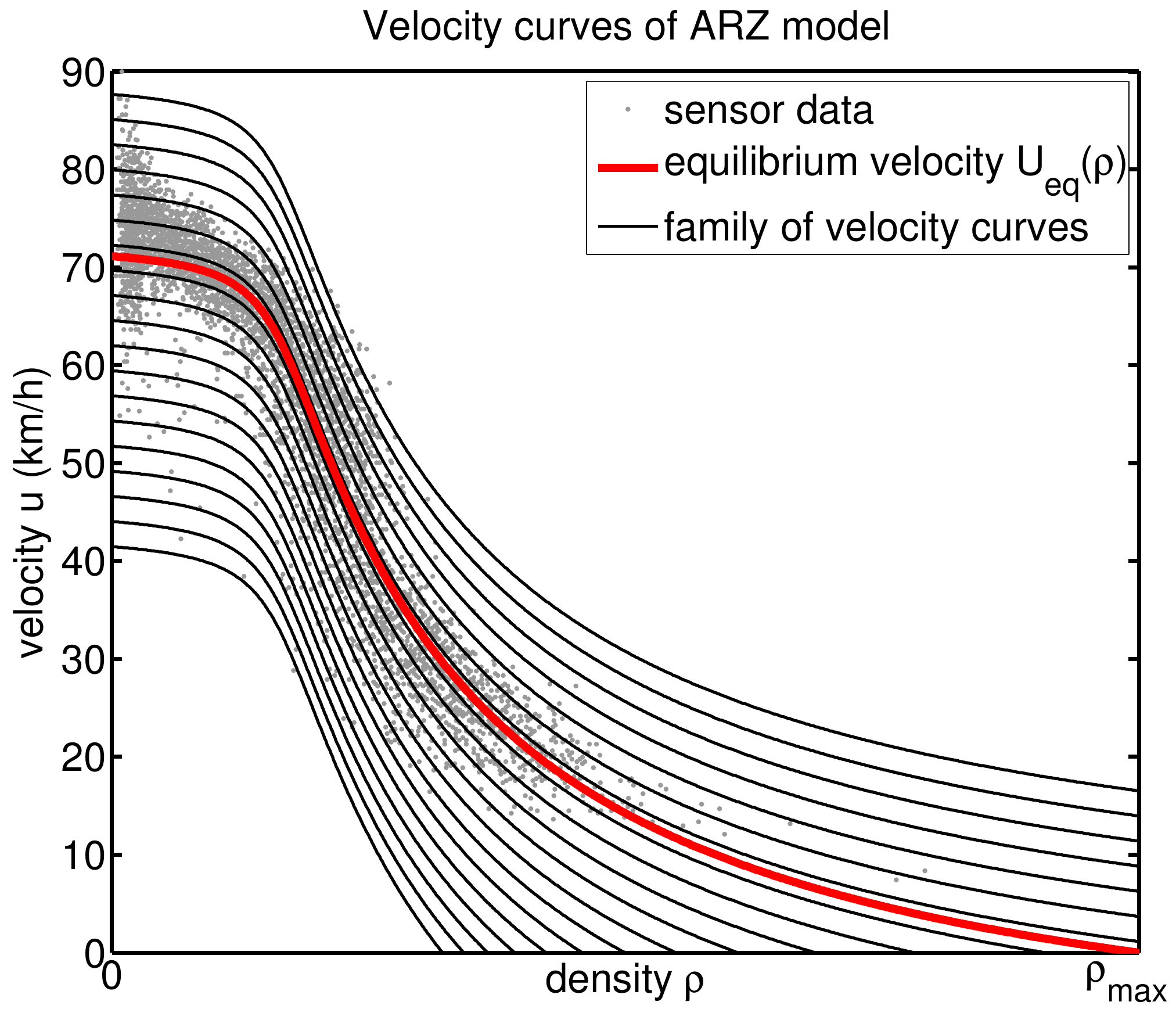}
\end{minipage}
\hfill
\begin{minipage}[b]{.494\textwidth}
\includegraphics[width=\textwidth]{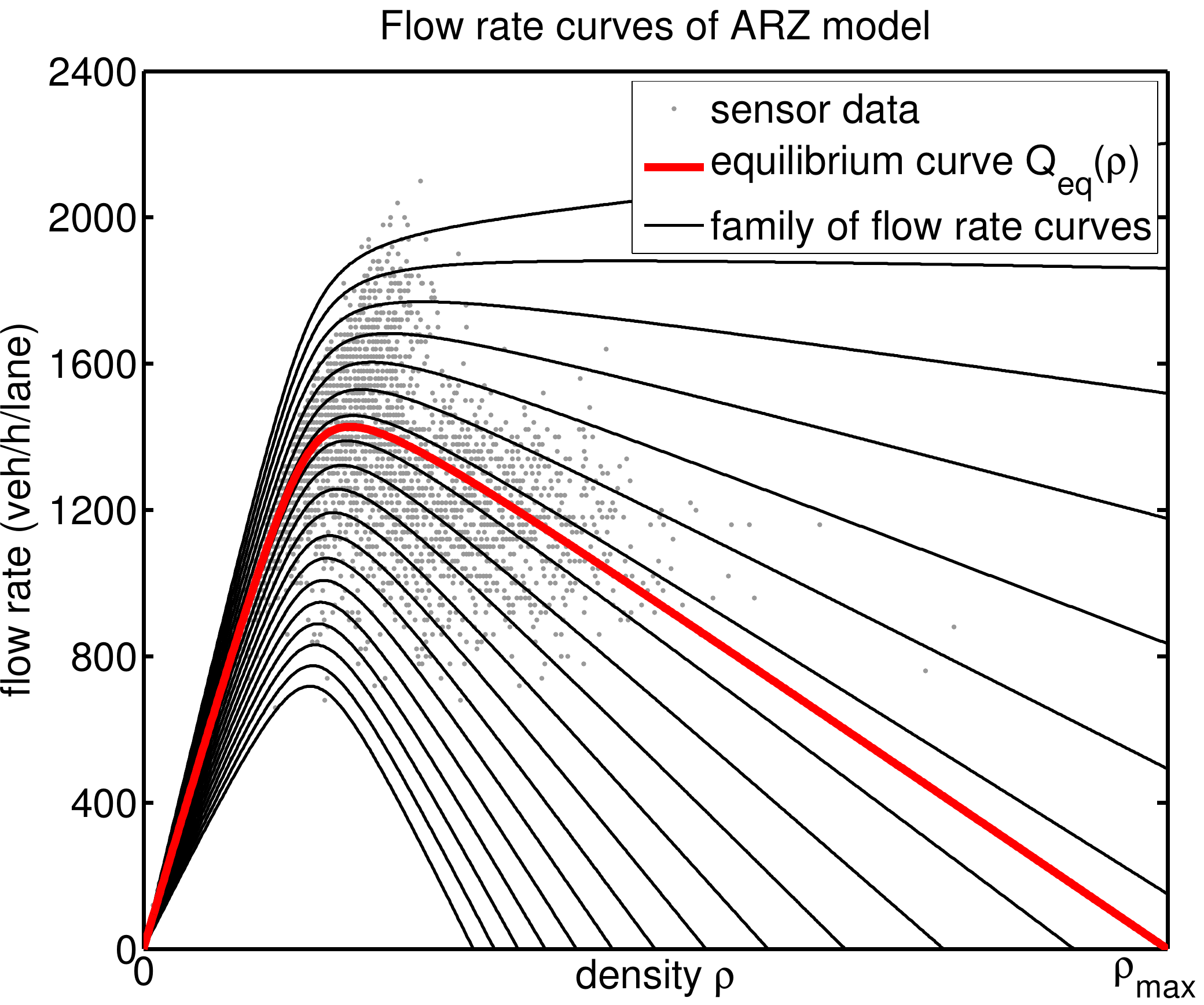}
\end{minipage}
\vspace{-.4em}
\caption{Velocity vs.\ density (left panel) and flow rate vs.\ density (right panel) curves of the smooth three-parameter model \eqref{eq:flow_rate_curve}, fitted with historic fundamental diagram data (gray dots), for the ARZ model.}
\label{fig:fd_lwr_arz}
\end{figure}

\subsection{Data-Fitting for the GARZ Model}
\label{sec:data-fitting_GARZ}
The GARZ model \eqref{eq:GARZ_model} is based on a generalized velocity function $V(\rho,w)$, where---as for the ARZ model---$w$ represents different types of drivers, and thus parameterizes a families of velocity and flow rate curves, respectively. We construct these families of curves by generalizing the least-squares fit \eqref{eq:LSQ} to a weighted least-squares fit, as follows.

Given a weight parameter $0<\beta<1$, we consider the minimization problem
\begin{equation}
\label{eq:WLSQ}
\min_{\alpha,\lambda,p}\left\{
(1-\beta) \sum^{n}_{j=1}{\left((Q_{\alpha,\lambda,p}(\rho_{j})-Q_{j})_{+}\right)^2}+
\beta \sum^{n}_{j=1}{\left((Q_{\alpha,\lambda,p}(\rho_{j})-Q_{j})_{-}\right)^2}
\right\}\;,
\end{equation}
where
\begin{align*}
\left(Q_{\alpha,\lambda,p}(\rho_{j})-Q_{j}\right)_{+}\
&= \max\left\{ Q_{\alpha,\lambda,p}(\rho_{j})-Q_{j},0\right\}\;, \\
\left(Q_{\alpha,\lambda,p}(\rho_{j})-Q_{j}\right)_{-}\
&= \max\left\{-Q_{\alpha,\lambda,p}(\rho_{j})+Q_{j},0\right\}\;.
\end{align*}
For $\beta=\frac{1}{2}$, problem \eqref{eq:WLSQ} reduces to \eqref{eq:LSQ}, i.e., the LWR equilibrium curve is recovered. For $\beta<\frac{1}{2}$, data below the curve is penalized more, and consequently the resulting curve moves downwards. In turn, if $\beta>\frac{1}{2}$, curves above the equilibrium curve are obtained.

The weighted least-squares problem \eqref{eq:WLSQ} generates a one-parameter family of curves $Q_\beta(\rho) = Q_{\alpha(\beta),\lambda(\beta),p(\beta)}(\rho)$, parameterized by $\beta$; and consequently it also generates a family of velocity functions
\begin{equation*}
V_\beta(\rho) = \begin{cases}
Q_\beta(\rho)/\rho &\text{if~}\rho>0 \\
\frac{\partial Q_\beta}{\partial\rho}(0) &\text{if~}\rho=0 \end{cases}\;.
\end{equation*}
In this paper, we restrict to the case that the velocity curves in the family are non-intersecting, i.e.,
\begin{equation}
\label{eq:non_intersecting}
\text{If~}\beta_1<\beta_2\;,\text{~~then~~}
V_{\beta_1}(\rho)<V_{\beta_2}(\rho)
\text{~~for~~}\rho\in [0,\rho_{\text{max}})\;.
\end{equation}
Note that in general (i.e., for a general family of flow rate functions, and for general data points), property \eqref{eq:non_intersecting} is not necessarily satisfied; and furthermore, problem \eqref{eq:WLSQ} need not have a unique solution. However, in all cases studied in this paper, the minimization problem \eqref{eq:WLSQ} does have a unique solution; and property \eqref{eq:non_intersecting} is in fact satisfied.

\begin{figure}
\begin{minipage}[b]{.480\textwidth}
\includegraphics[width=\textwidth]{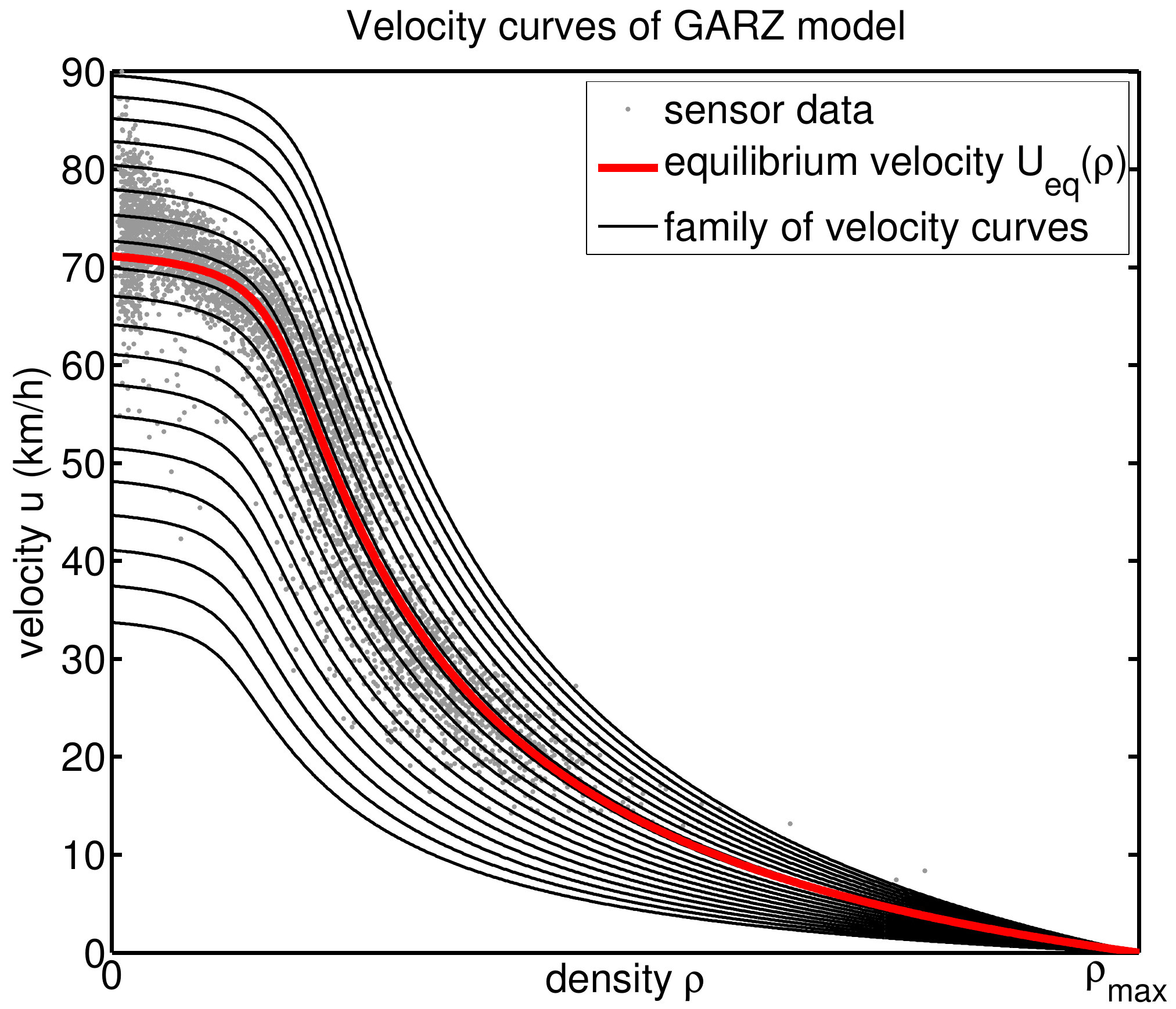}
\end{minipage}
\hfill
\begin{minipage}[b]{.494\textwidth}
\includegraphics[width=\textwidth]{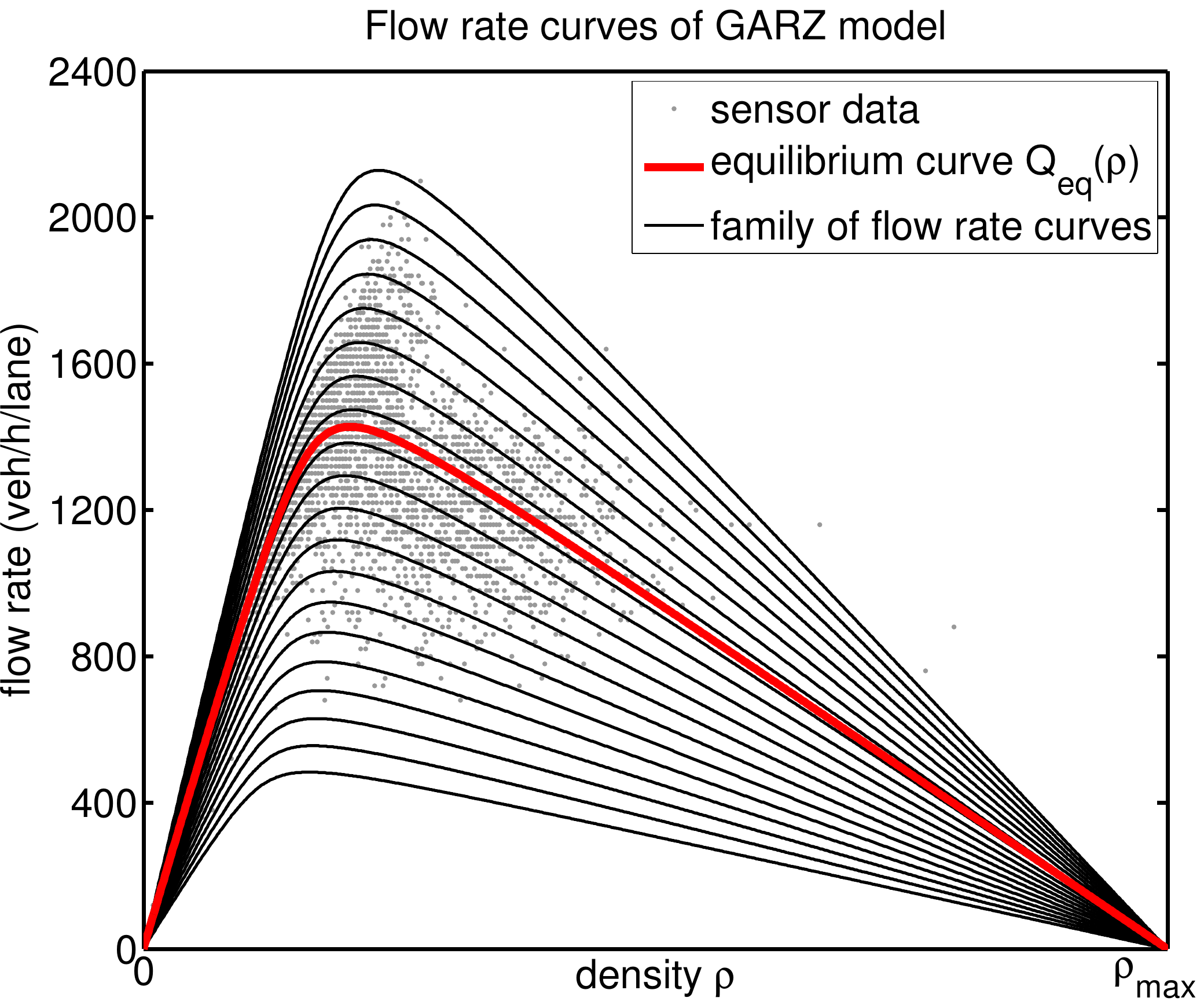}
\end{minipage}
\vspace{-.4em}
\caption{Family of velocity vs.\ density (left panel) and flow rate vs.\ density (right panel) curves generated from the weighted least square (WLSQ) algorithm in constructing the velocity function $u = V(\rho,w)$ in the GARZ model.}
\label{fig:curves_garz}
\end{figure}

While problem \eqref{eq:WLSQ} is defined for all $0<\beta<1$, values of $\beta$ that are extremely close to 0 or 1 tend to lead to unreasonable curves. The reason is that in the limit $\beta\nearrow 1$, the resulting curve is the lowest curve that has no data points above it, and as a result it adjusts to outliers in the data (similar arguments hold for $\beta\searrow 0$). We therefore define a lower/upper flow rate curve, such that 99.9\% of all data points lie above/below it. Consequently, together with the equilibrium curve, we have the following three special flow rate curves
\begin{equation*}
Q_\text{min}(\rho) = Q_{\beta_\text{min}}(\rho)\;, \quad
Q_\text{eq }(\rho) = Q_{\frac{1}{2}}(\rho)\;,\;\text{and} \quad
Q_\text{max}(\rho) = Q_{\beta_\text{max}}(\rho)\;,
\end{equation*}
where here we use $\beta_\text{min} = 10^{-4}$ and $\beta_\text{max} = 1-10^{-4}$. In the right panel of Fig.~\ref{fig:curves_garz} these three curves are depicted by the lowest black curve, the red curve, and the uppermost black curve, respectively.

Even though the parameter $\beta$ defines a family of flow rate curves as desired, it has the shortcoming that it does not have an immediate interpretation as a property of traffic flow. We therefore re-parameterize the family in terms of the empty road velocity $w$, as follows. For any $\beta\in [\beta_\text{min},\beta_\text{max}]$, we define $w$ as the resulting slope of the curve $Q_\beta$ at $\rho = 0$, i.e.,
\begin{equation*}
w = w(\beta) = V_\beta(0)\;.
\end{equation*}
Due to property \eqref{eq:non_intersecting}, the relationship $w = w(\beta)$ is strictly increasing, and thus can be inverted into $\beta = \beta(w)$, defined on the interval $w\in [w_\text{min},w_\text{max}]$, where $w_\text{min} = Q_\text{min}'(0)$ and $w_\text{max} = Q_\text{max}'(0)$. Using this re-parameterization, we obtain a generalized flow rate function
\begin{equation*}
Q(\rho,w) = Q_{\beta(w)}(\rho)\;,
\end{equation*}
and a generalized velocity function
\begin{equation*}
V(\rho,w) = V_{\beta(w)}(\rho)\;,
\end{equation*}
as used in the GARZ model \eqref{eq:GARZ_model}. The properties of $V(\rho,w)$ assumed in \S\ref{subsec:garz} are satisfied by construction.

\subsection{Domain Extension for the GARZ Model}
The systematical construction of a generalized velocity function $V(\rho,w)$, presented in \S\ref{sec:data-fitting_GARZ}, is in line with the regions of the GARZ variables defined in \S\ref{subsubsec:regions}. In particular, the function $W(\rho,u)$ is defined only for $V(\rho,w_\text{min})\le u\le V(\rho,w_\text{max})$. However, when applying the GARZ model in a forward computation, velocity data may be provided through initial and boundary conditions that lie outside of the domain of $W(\rho,u)$. In order to make sense of the model for such data, we effectively extend the domain of the function $W(\rho,u)$ via a projection of such data, as follows.

Given a density--velocity pair $(\rho,u)$, where $0<\rho<\rho_\text{max}$, we define a projected velocity as
\begin{equation*}
\tilde{U}(\rho,u) = \min\{\max\{u,V(\rho,w_\text{min})\},V(\rho,w_\text{max})\}\;,
\end{equation*}
and thus obtain the extended function
\begin{equation*}
\tilde{W}(\rho,u) = W(\rho,\tilde{U}(\rho,u))
\end{equation*}
that is defined for arbitrary velocity values. This simple projection (densities are left unchanged, and velocities are moved onto the lowest or highest curve, respectively) provides a constant extension of the function $W$ beyond its domain $\mathcal{D}_W$. Note that the range of $W$ remains unaffected as being $[w_\text{min},w_\text{max}]$, and consequently the function $\tilde{W}$ is not invertible outside of $\mathcal{D}_W$.

\vspace{1.5em}
\section{Numerical Methods}
\label{sec:numerical_methods}
All models are approximated numerically using a finite volume method on a regular grid of cell size $\Delta x$ and time step $\Delta t$, chosen so that the CFL condition \cite{CourantFriedrichsLewy1928}
\begin{equation*}
s_\text{max} \Delta t \leq \Delta x\;,
\end{equation*}
is satisfied, where $s_\text{max} = \max_k |\lambda_k|$ is the largest wave speed (see \S\ref{subsubsec:characteristics} for the characteristic velocities of the models). In all examples throughout this paper, the grid resolution is chosen small enough ($\Delta x\le 50\text{cm}$) so that the numerical approximation errors are much smaller than the model errors. Hence, the studies are conducted truly on the continuum level.

The first order model \eqref{eq:lighthill_whitham_richards_model} is solved using Godunov's method \cite{Godunov1959}. For the second order models, we have to account for the fact that the inhomogeneous GARZ model \eqref{eq:GARZ_conservative} becomes stiff if $\tau$ is small. Hence, we employ a semi-implicit finite volume scheme that treats the nonlinear hyperbolic terms explicitly and the relaxation terms implicitly (to prevent a time step restriction $\Delta t = O(\tau)$). The update rule of a state $(\rho_j^n,q_j^n)$ in cell $j$ from time $t_n$ to the state $(\rho_j^{n+1},q_j^{n+1})$ at time $t_{n+1} = t_n+\Delta t$ reads as
\begin{align}
\label{eq:update_rho}
\rho_j^{n+1} &= \rho_j^n - \frac{\Delta t}{\Delta x}\!
\prn{(\mathcal{F}_\rho)_{j+\frac{1}{2}}^n-(\mathcal{F}_\rho)_{j-\frac{1}{2}}^n}\;, \\
\label{eq:update_q}
q_j^{n+1} &= q_j^n - \frac{\Delta t}{\Delta x}\!
\prn{(\mathcal{F}_q)_{j+\frac{1}{2}}^n-(\mathcal{F}_q)_{j-\frac{1}{2}}^n}
+\frac{\Delta t}{\tau}\!
\prn{Q_\text{eq}(\rho_j^{n+1})-Q(\rho_j^{n+1},q_j^{n+1}/\rho_j^{n+1})}\;.
\end{align}
Here $(\mathcal{F}_\rho)_{j+\frac{1}{2}}^n = \mathcal{F}_\rho(\rho_j^n,q_j^n,\rho_{j+1}^n,q_{j+1}^n)$ denotes the numerical flux for the quantity $\rho$ through the boundary between cells $j$ and $j+1$; the other fluxes are defined accordingly. Moreover, $Q(\rho,w) = \rho V(\rho,w)$ is the model's two-parameter flow rate function, and $Q_\text{eq}(\rho) = \rho U_\text{eq}(\rho) = Q(\rho,w_\text{eq})$ is the equilibrium flow rate function.

As in the Godunov method \cite{Godunov1959} one could use the exact solution to the Riemann problem (cf.~\cite{AwRascle2000, Fan2013}) to define the numerical fluxes. However, this would require the inversion of the velocity function $u = V(\rho,w)$, which is costly for the GARZ model. A less expensive approach, employed here, is to define the numerical fluxes via the HLL approximate Riemann solver \cite{HartenLaxVanLeer1983}, which approximates the true Riemann problem by a single constant intermediate region. Note that due to the fine grid resolution, and due to the fact that initial and boundary conditions are continuous, spurious overshoots that may occur in the velocity (cf.~\cite{ChalonsGoatin2007}) are negligibly small.

Since the inhomogeneous model \eqref{eq:GARZ_conservative} possesses a relaxation only in the momentum equation, the time update of the density variable, given by \eqref{eq:update_rho}, is fully explicit. Therefore, in the time update of the generalized momentum, given by \eqref{eq:update_q}, the quantity $\rho_j^{n+1}$ is known. Specifically, the numerical scheme is implemented in three steps:
\begin{enumerate}[ 1)]
\item
Based on the data $(\rho_j^n,q_j^n)\,\forall j$ at time $t_n$, the fluxes $((\mathcal{F}_\rho)_{j+\frac{1}{2}}^n, (\mathcal{F}_q)_{j+\frac{1}{2}}^n)\,\forall j$ are computed.
\item
The new density states $\rho_j^{n+1}\,\forall j$ are computed via \eqref{eq:update_rho}.
\item
The new generalized momenta $q_j^{n+1}\,\forall j$ are computed according to \eqref{eq:update_q}. On the cell $j$, the new state $q_j^{n+1}$ is obtained as the solution of the scalar nonlinear equation $G(q) = 0$, where
\begin{equation}
\label{eq:nonlinear_equation}
G(q) = q + \frac{\Delta t}{\tau} Q(\rho_j^{n+1},q/\rho_j^{n+1})
- q_j^n + \frac{\Delta t}{\Delta x}\!
\prn{(\mathcal{F}_q)_{j+\frac{1}{2}}^n-(\mathcal{F}_q)_{j-\frac{1}{2}}^n}
-\frac{\Delta t}{\tau}Q_\text{eq}(\rho_j^{n+1})\;.
\end{equation}
The root of \eqref{eq:nonlinear_equation} is found up to machine accuracy within a few Newton steps, using $q_j^n$ as the starting guess.
\end{enumerate}
In the special case of the ARZ model \eqref{eq:aw_rascle_zhang_model_conservative}, the update \eqref{eq:update_q} is given by the explicit formula
\begin{equation}
q_j^{n+1}
= \frac{ q_j^n - \frac{\Delta t}{\Delta x}\!
\prn{(\mathcal{F}_q)_{j+\frac{1}{2}}^n-(\mathcal{F}_q)_{j-\frac{1}{2}}^n}
+\frac{\Delta t}{\tau} \rho_j^{n+1} w_\text{eq} }{1+\frac{\Delta t}{\tau}}\;.
\end{equation}
It should further be remarked that the semi-implicit scheme \eqref{eq:update_rho} and \eqref{eq:update_q} is equivalent to the fractional step approach that first approximates the homogeneous part of \eqref{eq:GARZ_conservative} via a forward Euler step, and then approximates the relaxation part via a backward Euler step. Hence, in the limit $\tau\to 0$ (while $\Delta t$ fixed), the scheme amounts to simply projecting $q$ onto the equilibrium curve in the relaxation step, i.e., equation \eqref{eq:update_q} turns into $q_j^{n+1} = Q_\text{eq}(\rho_j^{n+1})$. In turn, in the homogeneous case, i.e., $\tau\to\infty$, the relaxation terms are simply omitted.

The boundary data are provided by introducing a ghost cell adjacent to the outermost grid cell (on either side of the domain), in which the boundary state $(\rho,q)$ is assumed. The numerical fluxes in \eqref{eq:update_rho} and \eqref{eq:update_q} then by construction pick up the information corresponding to waves that enter the computational domain.

\vspace{1.5em}
\section{Validation and Comparison of Models via Real Data}
\label{sec:validation}
In the following, we validate the presented models by studying how well they reproduce the evolution of real traffic data, and in that process we compare the predictive accuracy of the models. A particular focus lies on the investigation of the extent to which the GARZ model, that addresses various shortcoming of traditional models, improves the actual model agreement with real traffic data. We conduct the validations using the NGSIM trajectory data set \cite{TrafficNGSIM} and the RTMC sensor data set \cite{TrafficMnDOT}.

The test framework considered here is based on the methodology presented in \cite{FanSeibold2013} and further developed in \cite{FanSeibold2014}: on a segment of highway, a three-detector test problem \cite{Daganzo1997} is formulated. At each end of the segment, the traffic state is (at all times) provided to the traffic model, which is advanced forward in time (using the numerical methods described in \S\ref{sec:numerical_methods}) inside the segment. The predictions that the traffic model produces in time are then compared to real data inside the segment, and the deviation between predicted and real traffic states is used to quantify the model error.

\subsection{Treatment of Data}
\label{subsec:data_treatment}
As described in \cite{FanSeibold2013}, continuous field quantities $\rho(x,t)$ and $u(x,t)$ are constructed from the NGSIM vehicle trajectory data \cite{TrafficNGSIM_I80}, using kernel density estimation \cite{Parzen1962, Rosenblatt1956}. In this approach, given vehicle locations $x_j(t)$ (including ``ghost vehicle'' positions, obtained via reflection at the boundaries, see \cite{KarunamuniAlberts2005}), density and flow rate functions are obtained as superpositions of Gaussian profiles,
\begin{equation*}
\rho(x,t) = \sum_{j=1}^n K(x-x_j(t))
\quad\text{and}\quad
Q(x,t) = \sum_{j=1}^n u_j K(x-x_j(t))\;,
\text{~where~}
K(x) = \tfrac{1}{\sqrt{2\pi}h}e^{-\frac{x^2}{2h^2}}\;,
\end{equation*}
and the velocity field is then given by $u(x,t) = Q(x,t)/\rho(x,t)$. The kernel width is chosen $h = 25$\;meters. These field quantities then define initial conditions ($t=0$) and boundary conditions (when evaluated at the segment boundary positions) for the traffic models, and reference states for the validation (inside the segment for $t>0$). Before the boundary data can be provided to the traffic model, one additional processing step must be applied to address spurious fast oscillations in the reconstructed boundary data (due to variations in the starting and end position of each vehicle trajectory in the data set): the time domain is divided into intervals of length 15\;seconds, and on each interval the boundary data is replaced by a cubic polynomial that is a least-squares fit to the data, under the constraint that the resulting evolution is globally $C^1$.

For the RTMC sensor data, vehicles densities and flow rates are given at three sensor positions, aggregated in intervals of length 30\;seconds. Temporally continuous quantities $\rho(x_s,t)$ and $u(x_s,t)$ at a sensor position $x_s$ are generated via cubic spline interpolation (in time) of the aggregated information. One shortcoming of the RTMC data is the absence of a reliable initial state (because information is given only at the sensor positions). This problem is circumvented by running the models forward through an initialization phase (5~minutes), before the actual model comparison is started. During this phase, the boundary data has time to move into the domain and create a reasonable initial state for the actual validation.

\subsection{Quantification of Model Errors}
\label{sec:error}
The quantification of the deviation of the model predictions from the actual data requires two aspects to be specified: first, which field quantities to consider and how to combine them into a single quantity; and second, if data is available at multiple positions and/or times, how to combine these multiple pieces of information into a single quantity?

Regarding the choice of field quantities, in this study we are interested in models that predict traffic densities (as required for instance for ramp metering) and velocities (as required for instance for travel time estimates) accurately. Since densities and velocities have different physical units, suitable normalization constants must be found, so that the deviations in each quantity contribute with equal influence to the total model error.

Given model predictions $\rho^\text{model}(x,t)$ and $u^\text{model}(x,t)$, and real data $\rho^\text{data}(x,t)$ and $u^\text{data}(x,t)$, we define a space-and-time-dependent error measure as
\begin{equation}
\label{eq:error_measure}
E(x,t) = \frac{|\rho^\text{model}(x,t)-\rho^\text{data}(x,t)|}{\Delta\rho}
+\frac{|u^\text{model}(x,t)-u^\text{data}(x,t)|}{\Delta u}\;,
\end{equation}
where the normalization constants $\Delta\rho$ and $\Delta u$ represent the ranges of the fundamental diagram data, as defined below. Note that various choices of normalization constants have been proposed in the literature. For instance, in \cite{BlandinWorkGoatinPiccoliBayen2011} the absolute errors in density and velocity are scaled with $\|\rho^\text{data}(x,t)\|_{L^1}$ and $\|u^\text{data}(x,t)\|_{L^1}$, respectively. A shortcoming of this choice is that for traffic flow at low densities, errors in density get divided by a very small number and thus significantly amplified. An alternative choice is employed in \cite{FanSeibold2013} by using $\rho_\text{max}$ and $u_\text{max}$ as normalization constants. However, these tend to give too much influence to velocity errors, because even in moving congested traffic flow, $\rho/\rho_\text{max}$ tends to be significantly smaller than $u/u_\text{max}$.

\begin{figure}
\centering
\begin{minipage}[b]{.90\textwidth}
\centering
\includegraphics[width=.90\textwidth]{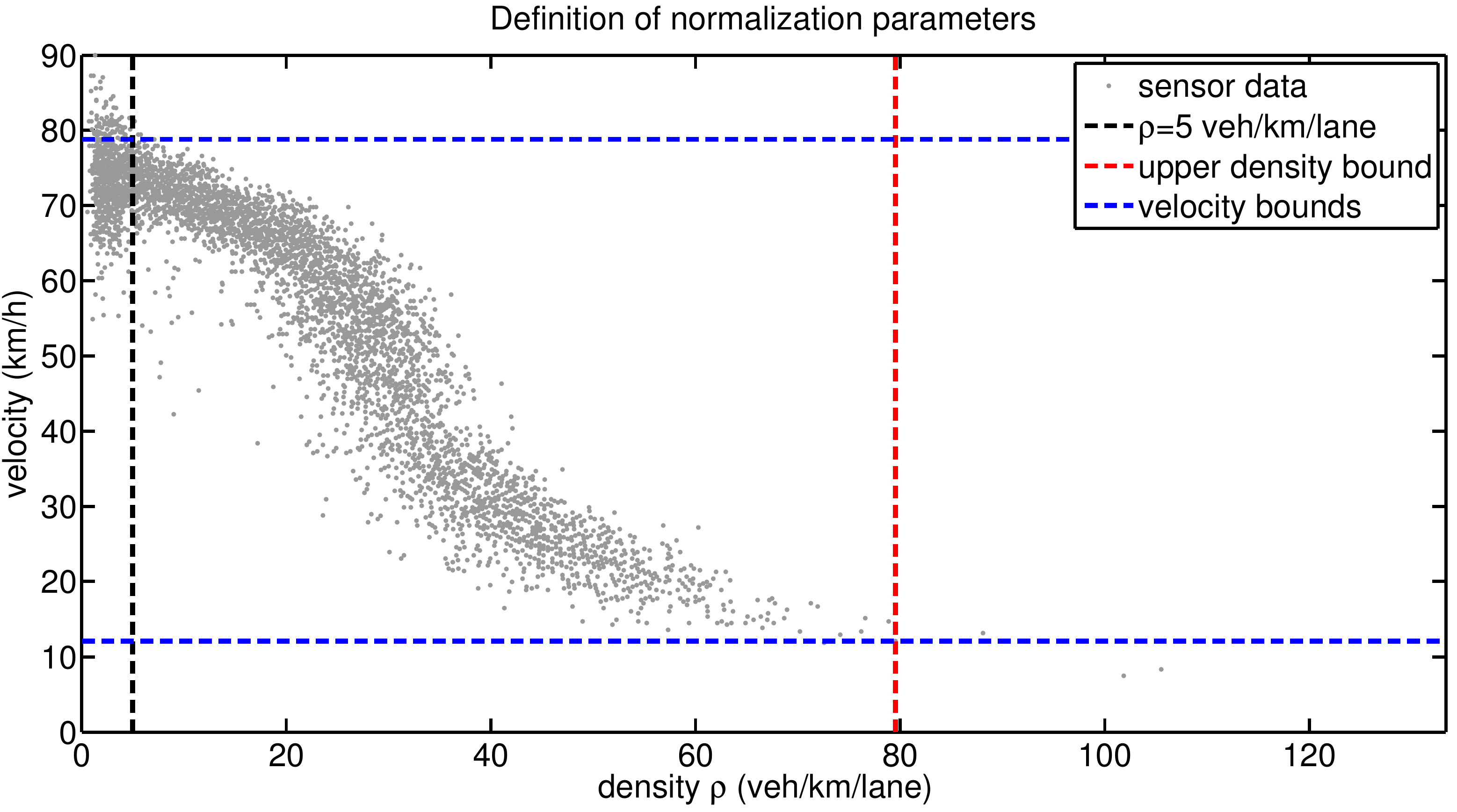}
\vspace{-.4em}
\caption{Construction of ranges in density and velocity in historic fundamental diagram data, using an upper density (red line) and lower and upper velocity boundaries (blue lines). Data points with densities below a threshold (black line), as well as outliers, are systematically excluded.}
\label{fig:normalization}
\end{minipage}
\end{figure}

In line with \cite{FanSeibold2014}, we argue that balanced weights are given when the error in each field quantity is related to the maximum variation that the respective quantity exhibits in the historic fundamental diagram. In order to exclude the influence of outliers in the data, we conduct the following four-step approach. First, all data points $(\rho_j,u_j)$ with $\rho_j<5\,\text{veh}/\text{km}/\text{lane}$ are neglected. The rationale is that these data do not contribute any meaningful information about the spread in the traffic states, and moreover such low density values are not meaningful in the context of a macroscopic description of traffic flow. In Fig.~\ref{fig:normalization}, this boundary is depicted by the black line. Second, similar to the method presented in \cite{BlandinBrettiCutoloPiccoli2009}, the upper density boundary $\rho^\text{up}$ is defined such that 99.9\% of the remaining data points lie below it (red line in Fig.~\ref{fig:normalization}). Third, the lower (upper) velocity boundary $u^\text{low}$ ($u^\text{up}$) is defined such that 99.9\% of the remaining data points lie above (below) it (blue lines in Fig.~\ref{fig:normalization}). Fourth, we define the data ranges
\begin{equation*}
\Delta\rho = \rho^\text{up}
\quad\text{and}\quad
\Delta u = u^\text{up}-u^\text{low}\;.
\end{equation*}

Regarding the norms and averages to measure the model errors, we use the following expressions. On a segment $x\in [0,L]$ and time interval $t\in [0,T]$, spatial and spatio-temporal averages are considered
\begin{align}
E^{x}(t) &= \frac{1}{L}\int_0^L E(x,t)\ud{x}\;, \label{eq:error_x} \\
E &= \frac{1}{TL}\int_0^T\int_0^L E(x,t)\ud{x}\ud{t}\;. \label{eq:error_xt}
\end{align}
Moreover, for the RTMC data set, the temporal error at a sensor position $x_s$ inside the road segment on a given day is considered, as well averages over multiple days
\begin{align}
E_\text{day} &= \frac{1}{T}\int_0^T E(x_s,t)\ud{t}\;, \label{eq:error_t} \\
E &= \frac{1}{\#\text{days}}\sum_{\text{day}=1}^{\#\text{days}} E_\text{day}\;. \label{eq:error_day}
\end{align}

\begin{figure}
\begin{minipage}[b]{.32\textwidth}
\includegraphics[width=\textwidth]{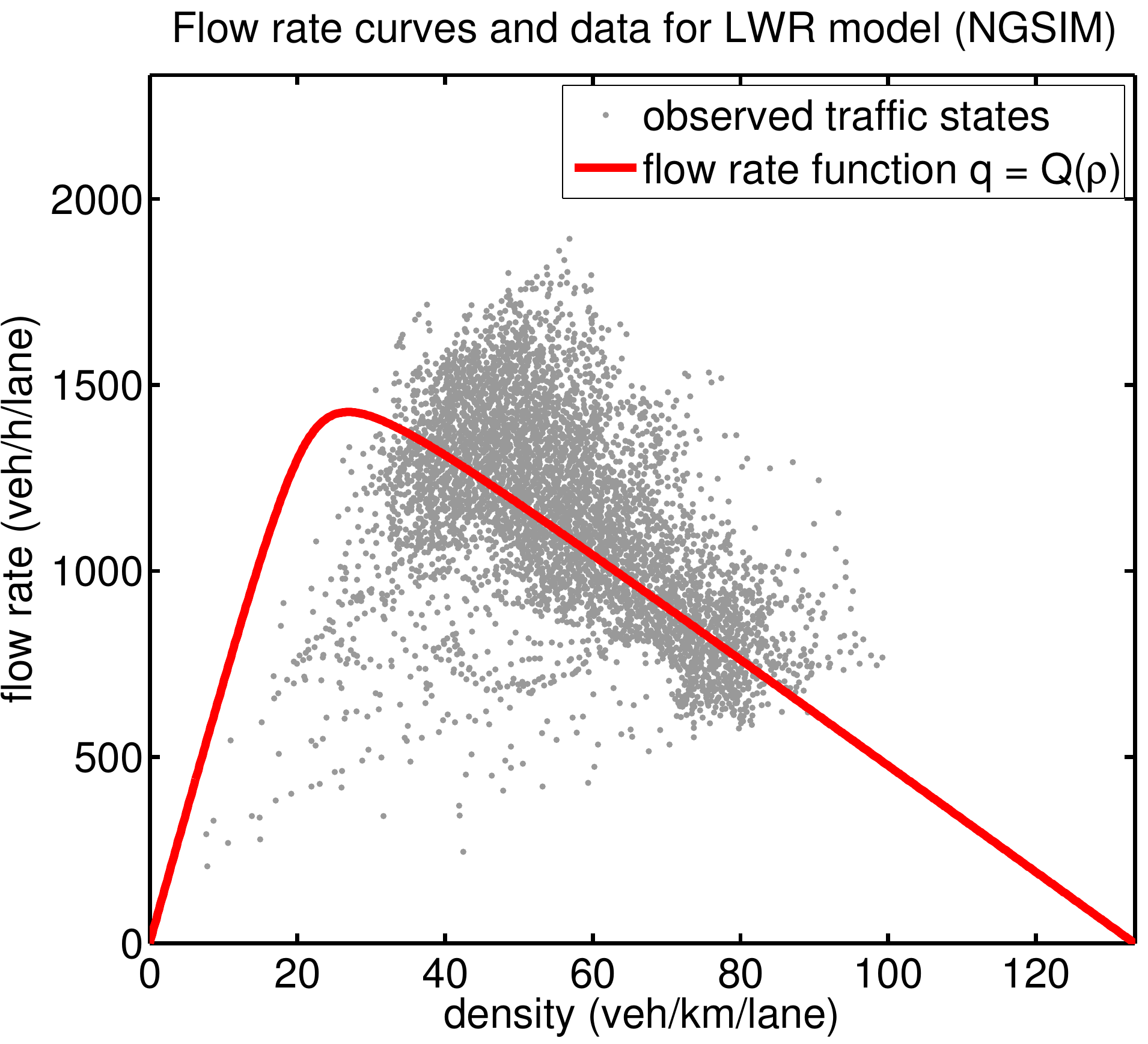}
\end{minipage}
\hfill
\begin{minipage}[b]{.32\textwidth}
\includegraphics[width=\textwidth]{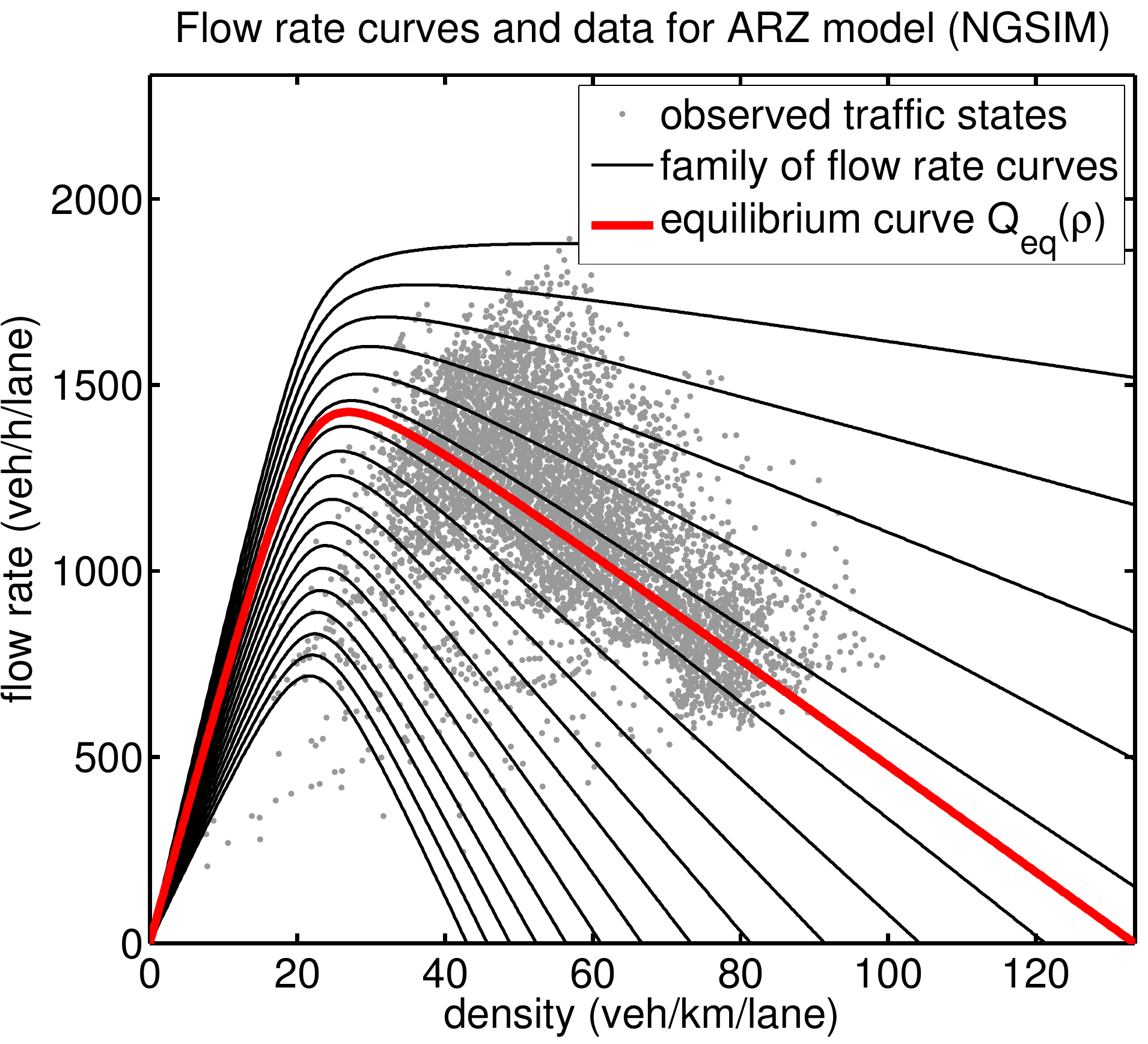}
\end{minipage}
\hfill
\begin{minipage}[b]{.32\textwidth}
\includegraphics[width=\textwidth]{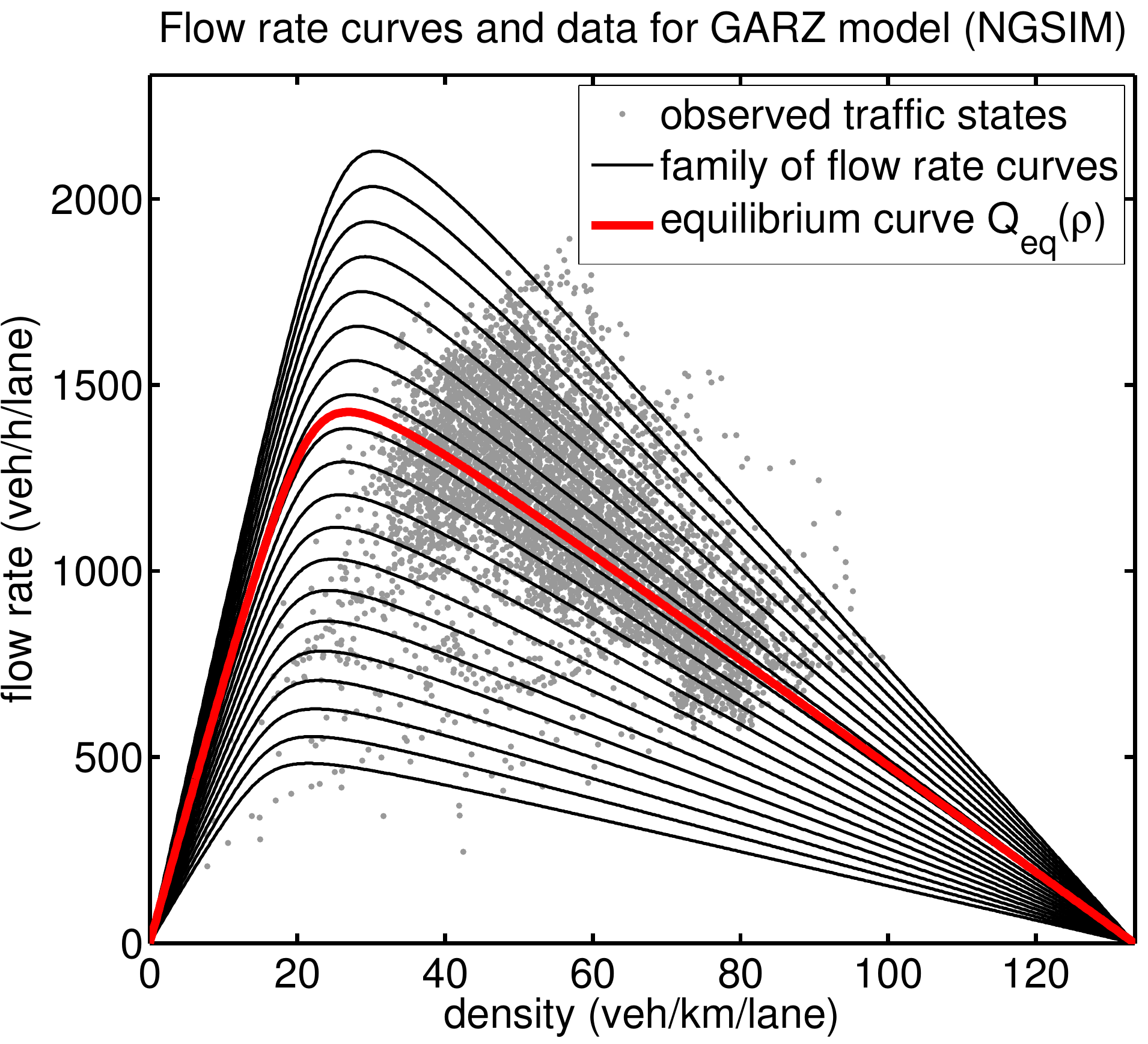}
\end{minipage}

\vspace{.5em}
\begin{minipage}[b]{.32\textwidth}
\includegraphics[width=\textwidth]{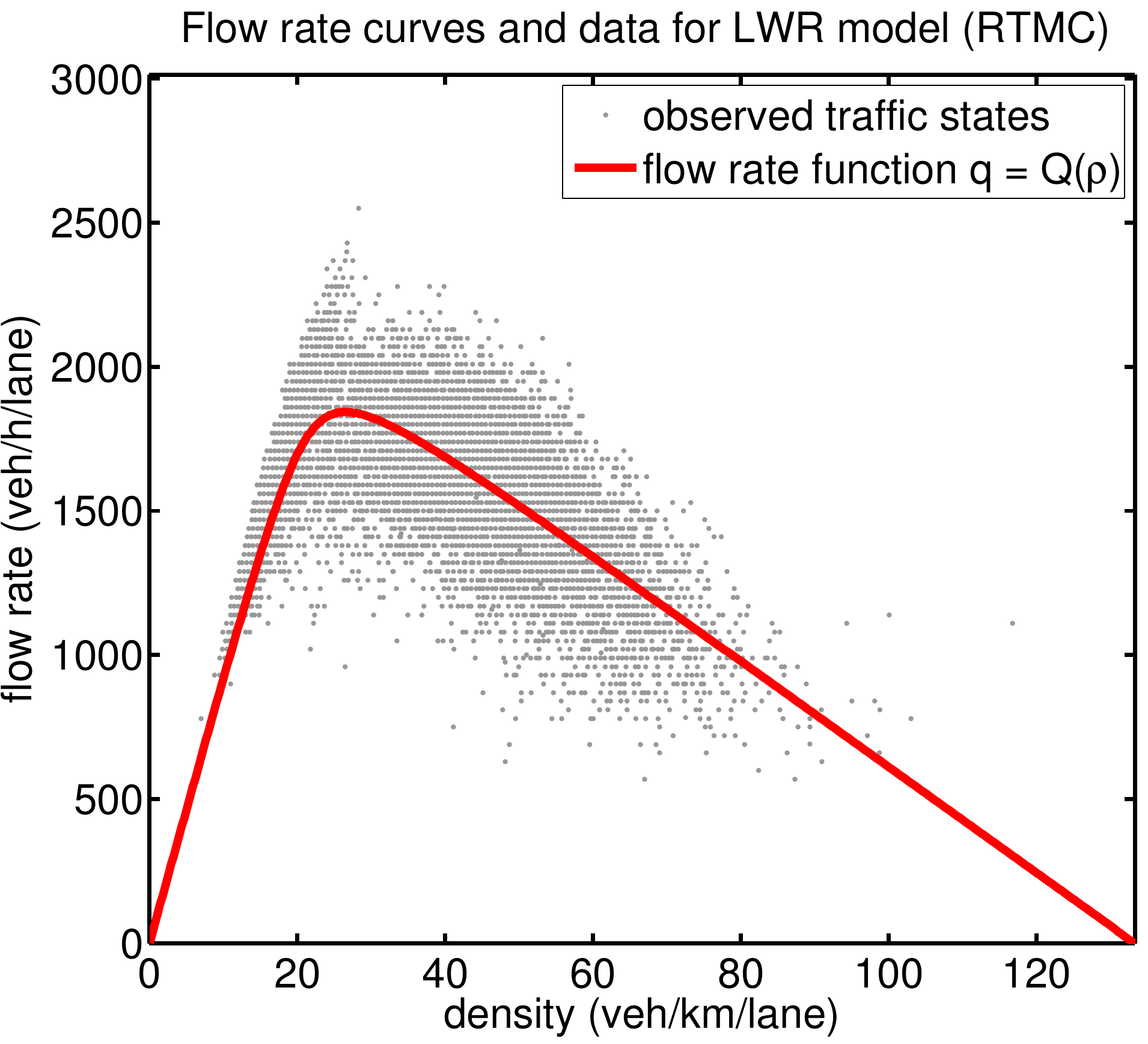}
\end{minipage}
\hfill
\begin{minipage}[b]{.32\textwidth}
\includegraphics[width=\textwidth]{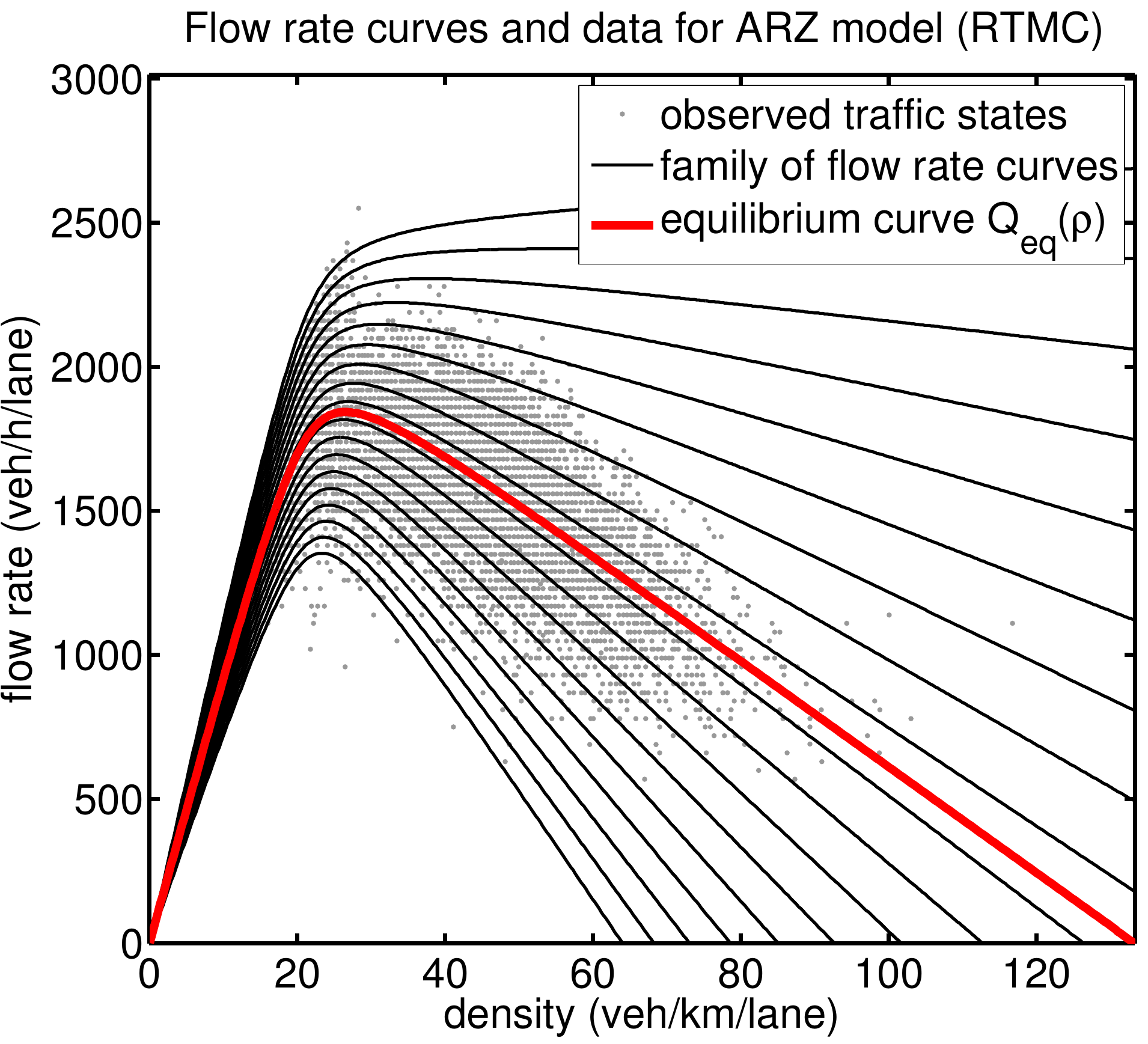}
\end{minipage}
\hfill
\begin{minipage}[b]{.32\textwidth}
\includegraphics[width=\textwidth]{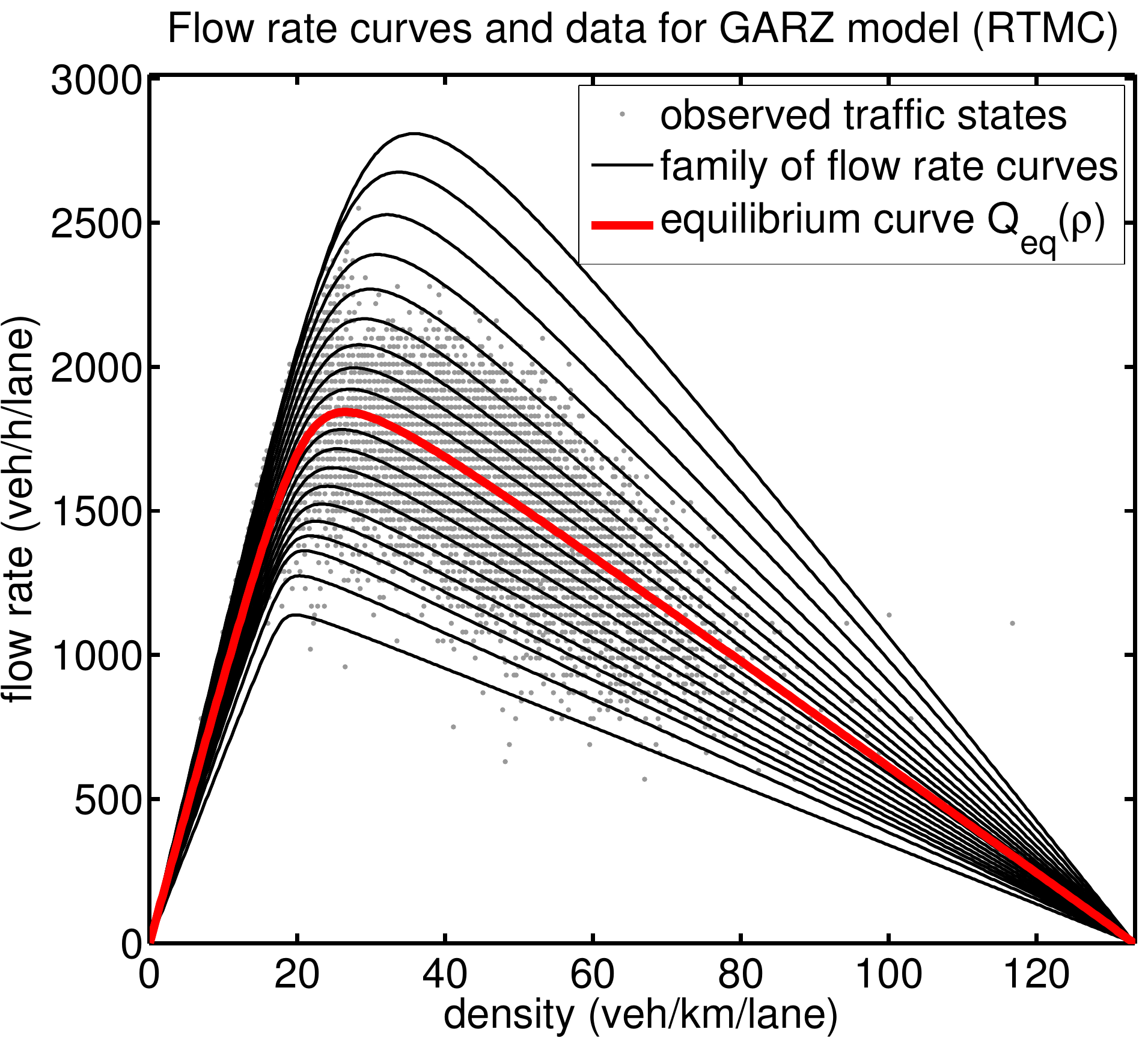}
\end{minipage}

\vspace{-.4em}
\caption{Flow rate vs.\ density curves for the models LWR (left column), ARZ (middle column), and GARZ (right column), together with the traffic states observed in the test cases (gray dots). The results show the NGSIM data (top row) and the RTMC data (bottom row).}
\label{fig:models_flow_rate_curves}
\end{figure}

\subsection{List of Models}
\label{subsec:list_of_models}
We compare the following four models in terms of their predictive accuracy of the real data:
\begin{enumerate}[ 1)]
\item\textbf{Interpolation:} A predictor that, at any instance in time, constructs the traffic density and velocity via direct linear interpolation of the boundary conditions, i.e., on the road segment $x\in [0,L]$, the predicted state is $\rho(x,t) = \rho(0,t)(1-x/L)+\rho(L,t)x/L$ and $u(x,t) = u(0,t)(1-x/L)+u(L,t)x/L$. Of course, this predictor is not an actual traffic model. However, due to its simplistic nature, it represents an important means of comparison.
\item\textbf{LWR:} The first-order model \eqref{eq:lighthill_whitham_richards_model}, in which only the density state $\rho(x,t)$ is evolved, based on the data-fitted equilibrium velocity curve $V(\rho,w_\text{eq})$, resulting from the least-squares fit \eqref{eq:LSQ} of the family \eqref{eq:flow_rate_curve} to the fundamental diagram data.
\item\textbf{ARZ:} The second-order ARZ model \eqref{eq:aw_rascle_zhang_model_w} that generalizes the least-squares fitted flow rate curve of the LWR model to a family of curves $V(\rho,w) = V(\rho,w_\text{eq})+(w-w_\text{eq})$.
\item\textbf{GARZ:} The second-order generalized ARZ model \eqref{eq:GARZ_model}, whose generalized velocity function $V(\rho,w)$ is obtained via a weighted least-squares fit \eqref{eq:WLSQ} of the family \eqref{eq:flow_rate_curve} to the fundamental diagram data.
\end{enumerate}
The fundamental diagram curves $Q_w(\rho) = \rho V(\rho,w)$ of the three traffic models are shown in Fig.~\ref{fig:models_flow_rate_curves}, overlayed on top of the traffic state data that are actually observed in the test cases (gray dots). The top row of figures corresponds to the NGSIM data, and the bottom row represents the RTMC data. The LWR model is shown on the left, the ARZ model in the middle, and the GARZ model on the right.

Regarding the reproduction of real traffic data, it should be stressed that the models/pre{\-}dictors differ in the way they use data. In the model generation step, the Interpolation predictor uses no historic data; the LWR and ARZ model are based on the same least-squares fit; and the GARZ model employs more information from the fundamental diagram data due to the weighted least-squares fit. In turn, during the advance forward in time, the Interpolation predictor uses two pieces of information ($\rho$ and $u$) at each boundary. In contrast, the LWR model uses only one piece of information ($\rho$) at one of the two boundaries (assuming the traffic states at the two boundaries are either both in free flow or both congested). Finally, the ARZ and GARZ model use a total of two pieces of information through the boundary conditions.

\begin{figure}
\begin{minipage}[b]{.495\textwidth}
\includegraphics[width=\textwidth]{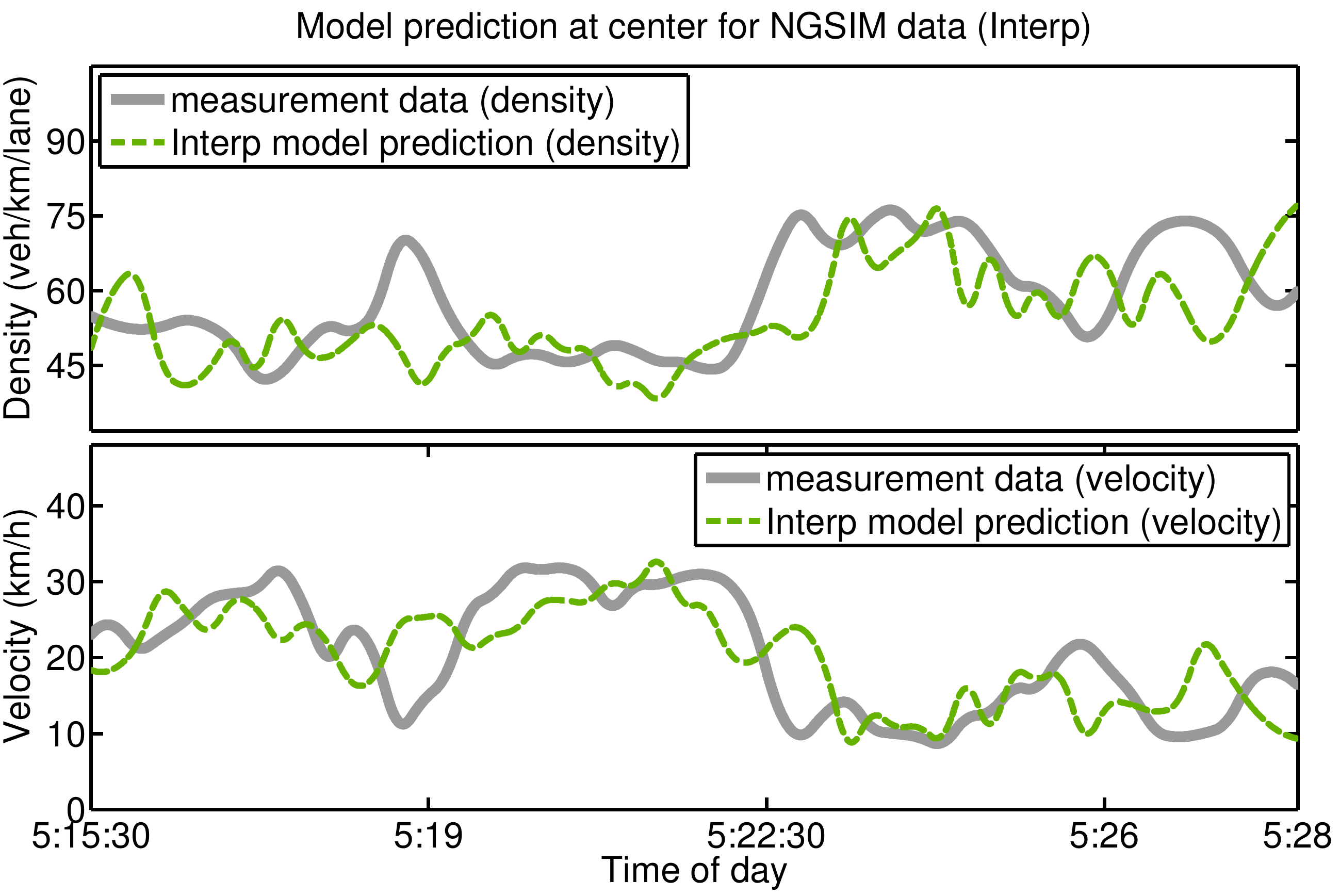}
\end{minipage}
\hfill
\begin{minipage}[b]{.495\textwidth}
\includegraphics[width=\textwidth]{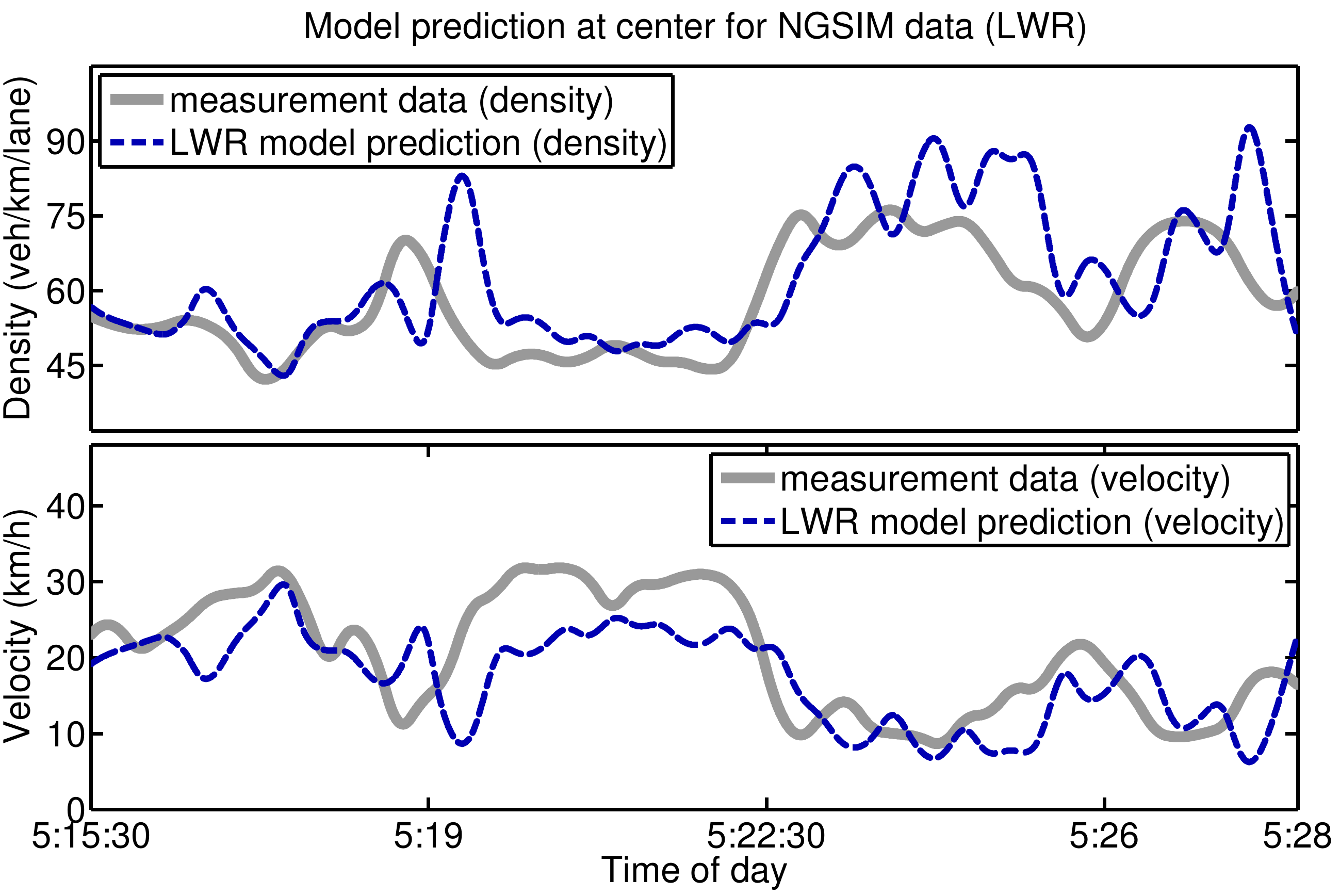}
\end{minipage}

\vspace{.5em}
\begin{minipage}[b]{.495\textwidth}
\includegraphics[width=\textwidth]{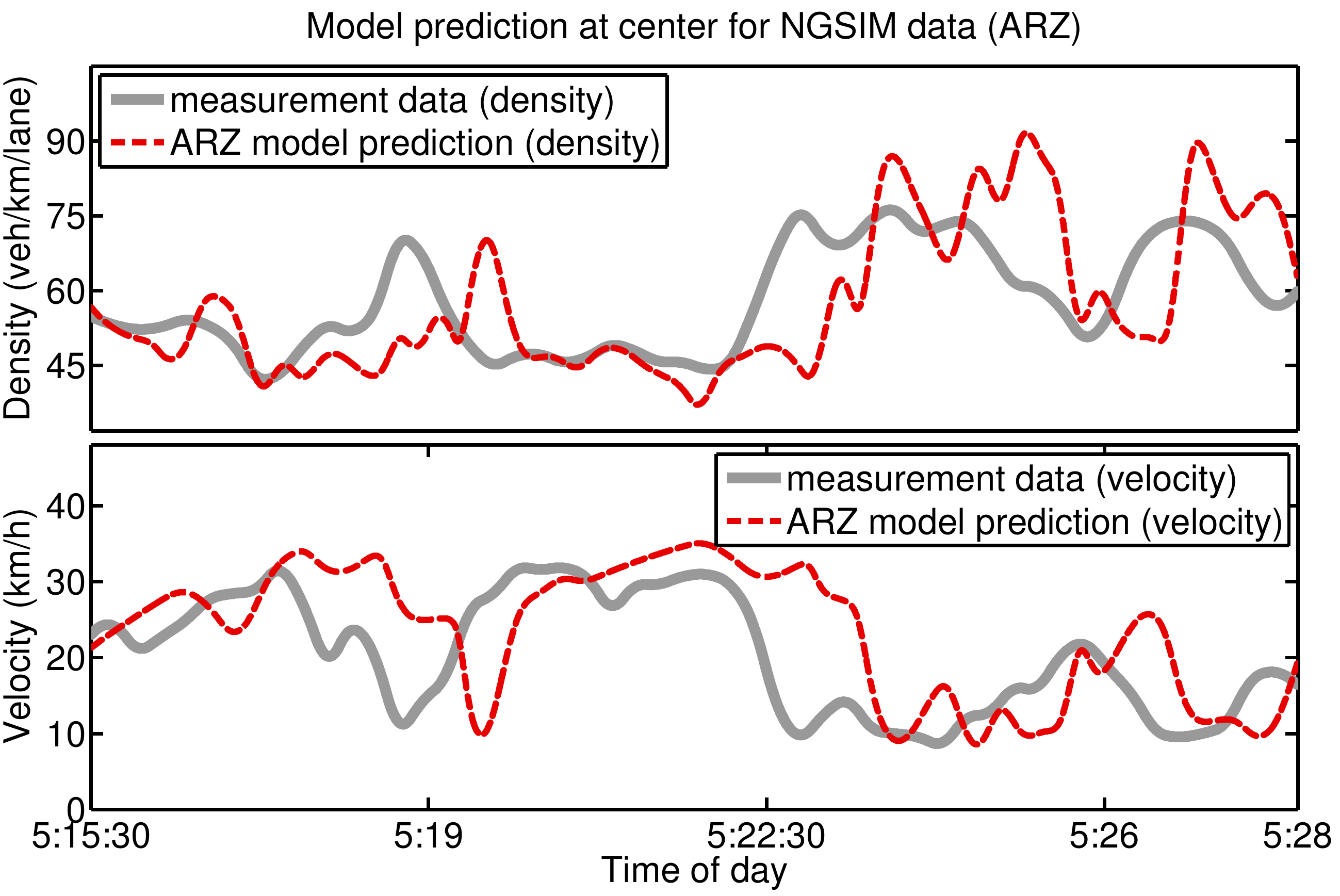}
\end{minipage}
\hfill
\begin{minipage}[b]{.495\textwidth}
\includegraphics[width=\textwidth]{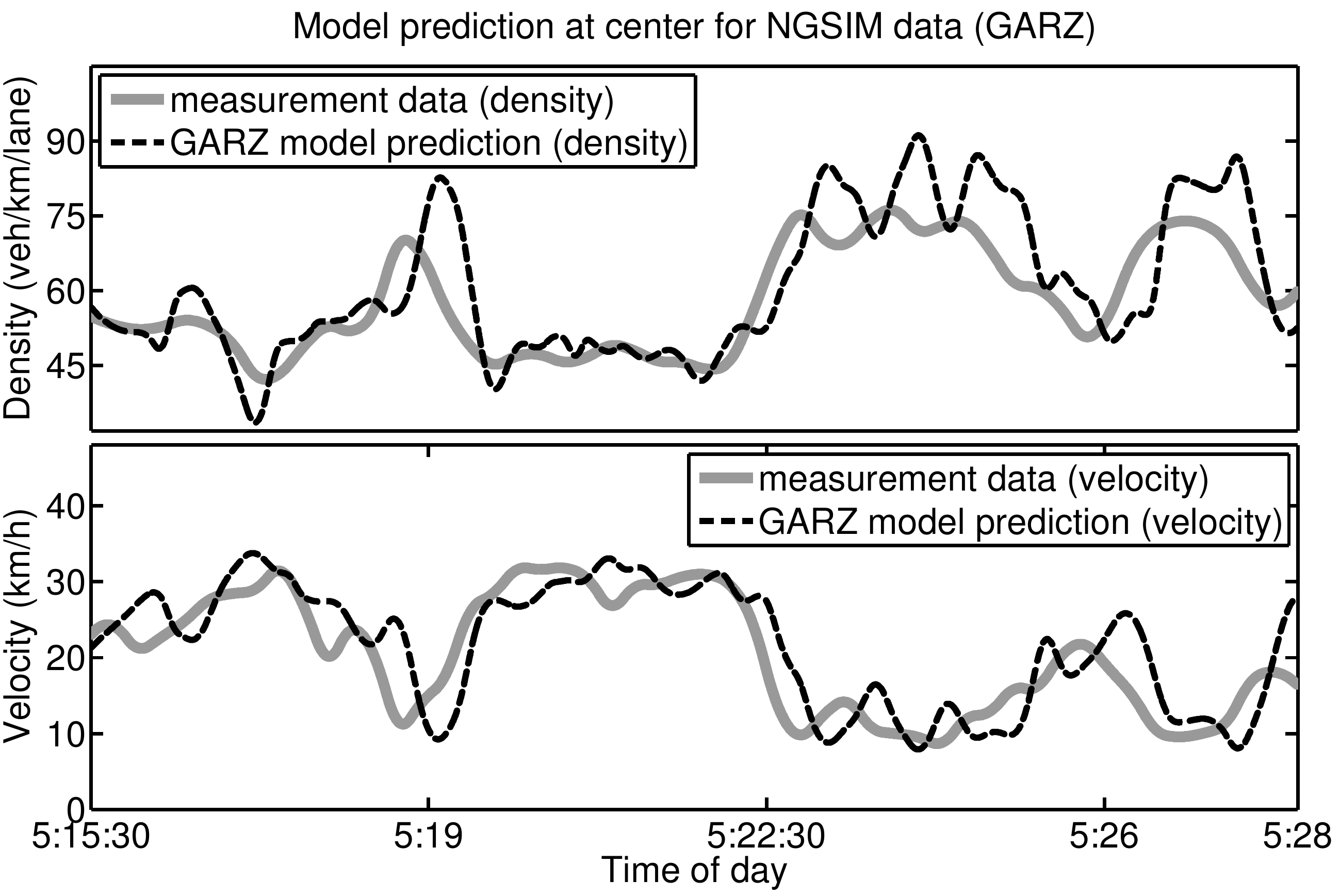}
\end{minipage}

\vspace{-.4em}
\caption{Model comparison on the NGSIM 5:15pm--5:30pm data set. In each panel, the time-evolutions of the model predictions (colored dashed curve) and measured data (solid gray curve) at the middle position of the study area are shown (top box: $\rho$, bottom box: $u$). The four panels correspond to: Interpolation (top left, green), LWR (top right, blue), ARZ (bottom left, red), GARZ (bottom right, black).}
\label{fig:NGSIM_evolution}
\end{figure}

\begin{figure}[p]
\centering
\includegraphics[width=.97\textwidth]{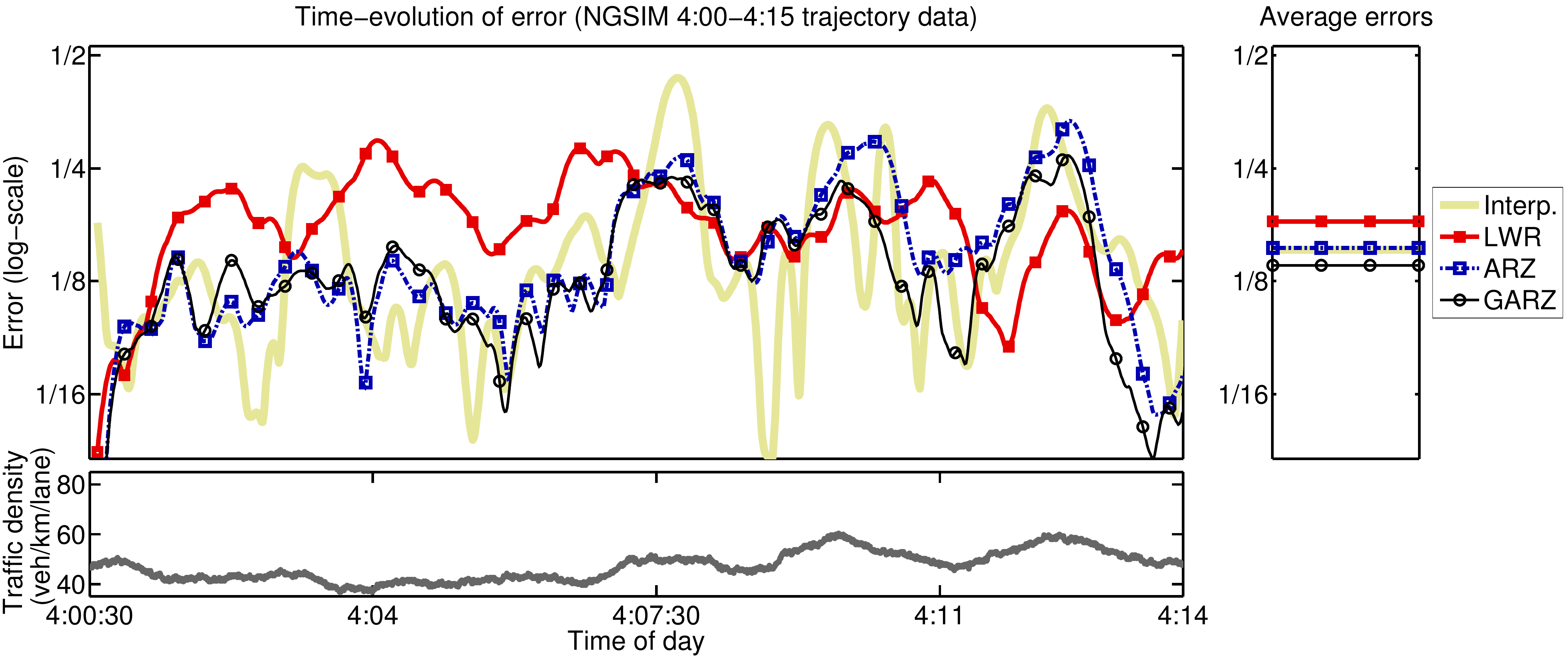}

\vspace{.7em}
\includegraphics[width=.97\textwidth]{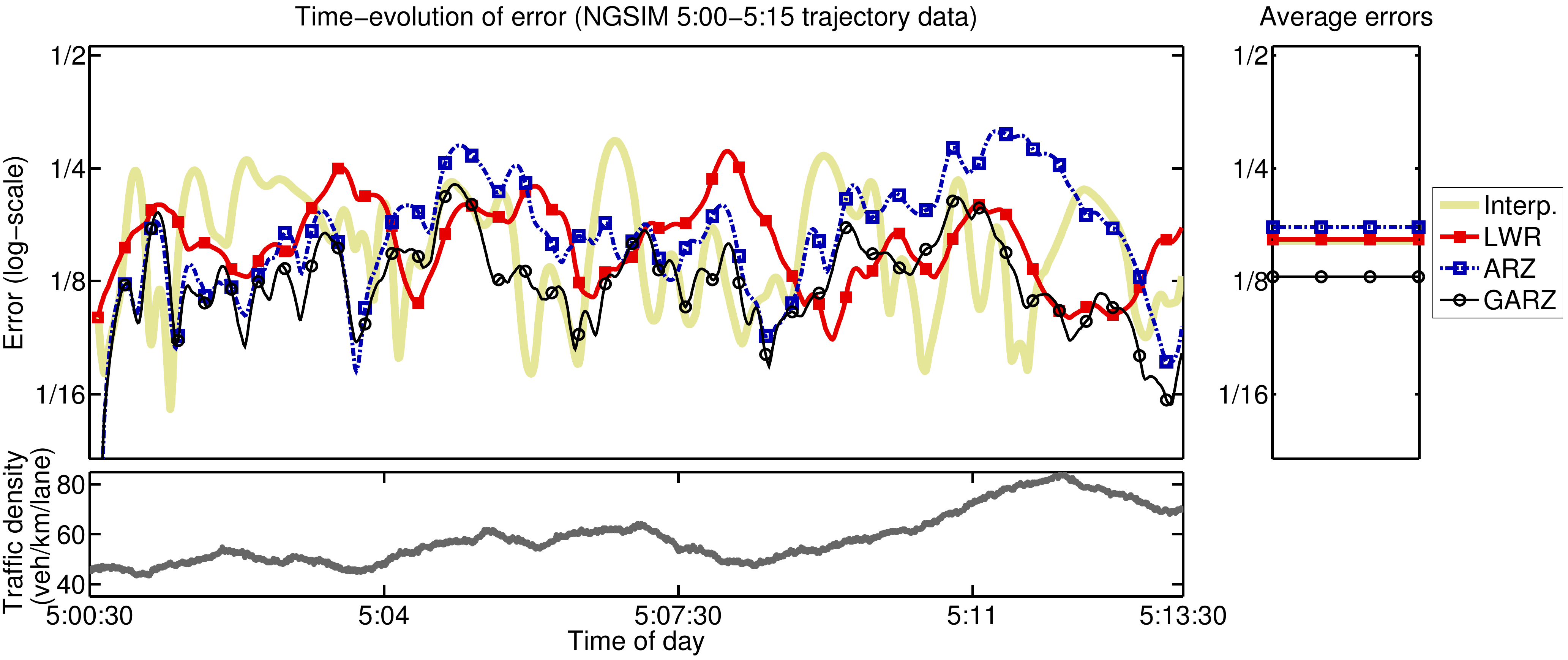}

\vspace{.7em}
\includegraphics[width=.97\textwidth]{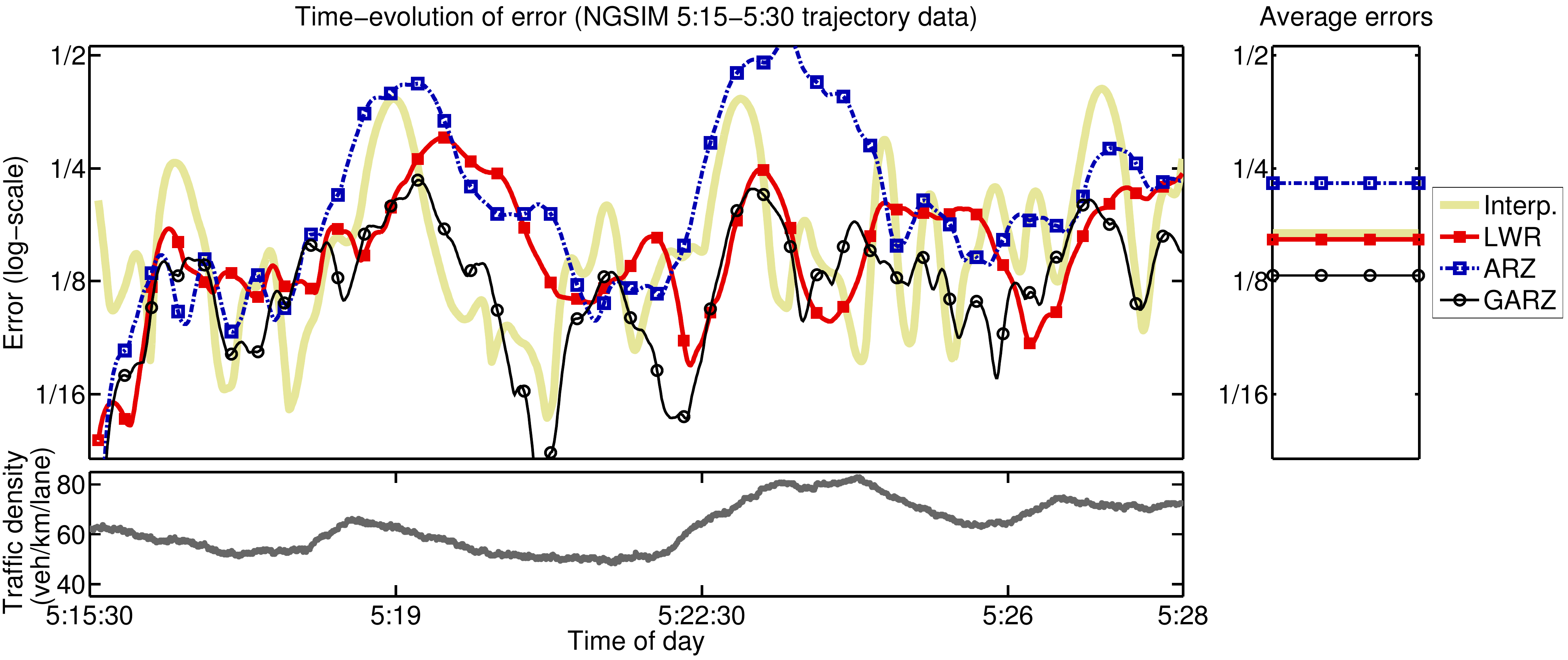}

\vspace{-.4em}
\caption{Comparison of traffic models for the data sets NGSIM 4:00pm--4:15pm (at the top), NGSIM 5:00pm--5:15pm (in the middle), and NGSIM 5:15pm--5:30pm (at the bottom). In each panel, the time-evolution of spatially averaged errors (left top box), measurement traffic density data (left bottom box), and space-time average errors (right box) are shown in log-scale for Interpolation, LWR, ARZ, and GARZ.}
\label{fig:NGSIM_errors}
\end{figure}

\subsection{Model Validation with the NGSIM Trajectory Data}
\label{subsec:validation_ngsim}
As in \cite{FanSeibold2013}, we consider a segment of 450 meters in length in the domain of the NSGIM vehicle trajectories. Data for three time intervals is available: 4:00pm--4:15pm, 5:00pm--5:15pm, and 5:15pm--5:30pm. However, because our model validation requires a traffic state be defined on the complete study segment, slightly shorter study time intervals must be chosen that guarantee that recorded vehicles are present everywhere on the road. We choose: 4:00:30pm--4:14:00pm, 5:00:30pm--5:13:30pm, and 5:15:30pm--5:28:00pm. The parameters of the traffic models, obtained by fitting to the fundamental diagram data provided with NSGIM, are given in the NGSIM row of Table~\ref{tab:model_parameters}. In line with the temporal resolution of the data, we generate density and velocity functions (real data and model predictions) in intervals of 0.1 seconds.

Figure~\ref{fig:NGSIM_evolution} displays the time-evolution (5:15pm--5:30pm) of the traffic density and velocity that are predicted by the selected models, in comparison with the real evolution of these quantities, at the mid-point of the study area. Both the LWR and the ARZ model do reproduce the general trends present in the true density evolution, albeit with a systematic delay of 30--60 seconds (see \cite{FanSeibold2014} for a detailed discussion on this delay). One difference between LWR and ARZ is that the latter reproduces, modulo the delay, the velocity evolution better than the former. This is particularly visible in the predicted velocity in the plateau between 5:19:00pm and 5:22:30pm. In turn, the GARZ model captures the evolution of density and velocity significantly better than the other two traffic models. There is still a systematic delay in the model predictions, but this delay is very small. Finally, the Interpolation predictor captures the large--scale features of the real data as well. However, it also exhibits an oscillatory behavior of a larger frequency. This is due to the fact that the linear interpolations transmit temporal oscillations from both boundaries immediately to the observation site. In contrast, the traffic models pick up fewer boundary data and furthermore oscillations can turn into N-waves and thus reduce in magnitude as they move into the domain.

\begin{table}
\begin{tabular}{|c||c|c|c|c|c|}
\hline
Data set & $\alpha$ (veh/h/lane) & $\lambda$ & $p$ & $w_\text{min}$ (km/h) & $w_{\text{max}}$ (km/h) \\
\hline\hline
NGSIM & 247.38 & 23.41 & 0.16 & 39.49 &  82.01 \\
RTMC  & 316.46 & 23.91 & 0.16 & 75.81 & 102.82 \\
\hline
\end{tabular}
\vspace{.6em}
\caption{Model parameters for the data-fitted models for the two data sets. Here, $\alpha$, $\lambda$, and $p$ are the free parameters of the equilibrium flow rate curve $Q_\text{eq}(\rho)$ in the family \eqref{eq:flow_rate_curve}. Moreover, $w_\text{min}$ and $w_\text{max}$ denote the boundaries of the empty road velocity for the GARZ model, as described in \S\ref{sec:data-fitting_GARZ}.}
\label{tab:model_parameters}
\end{table}

To quantify the predictive accuracy of the different models, we turn to the error metric \eqref{eq:error_measure}. Figure~\ref{fig:NGSIM_errors} shows the model errors for the NGSIM data sets: 4:00pm--4:15pm (top), 5:00pm--5:15pm (middle), and 5:15pm--5:30pm (bottom). In each panel, the time-evolution of the spatially averaged error \eqref{eq:error_x} is shown (top-left box), as well as the spatio-temporal average error \eqref{eq:error_xt} (top-right box). The models are: Interpolation (thick solid yellow), LWR (solid red), ARZ (dashed blue), and GARZ (thin solid black). In each bottom-left box, the time-evolution of the average traffic density is shown. The numerical values of the model errors \eqref{eq:error_xt} are given in rows 2--4 of Table~\ref{tab:model_errors}.

\begin{table}
\begin{tabular}{|ll||lr|lr|lr|lr|}
\hline
Data set\rule{0em}{1.0em}\hspace{-1em} &&
Interp. \hspace{-10em}& &LWR\hspace{-10em} & & ARZ\hspace{-10em} & & GARZ\hspace{-5em} &  \\
\hline\hline
NGSIM\rule{0em}{0em}
      & 4:00--4:15 &0.151 &(+10\%) & 0.181 &(+31\%) & 0.153 &(+11\%) & 0.138 &\\
NGSIM & 5:00--5:15 &0.160 &(+25\%) & 0.161 &(+26\%) & 0.174 &(+35\%) & 0.129 &\\
NGSIM & 5:15--5:30 &0.168 &(+30\%) & 0.162 &(+25\%) & 0.228 &(+76\%) & 0.129 &\\
\hline
RTMC\rule{0em}{0em}
     & congested   &0.203 &(+14\%) & 0.228 &(+24\%) & 0.208 &(+13\%) & 0.184 & \\
RTMC & non-cong.   &0.108 &(+63\%) & 0.081 &(+26\%) & 0.067 & (+4\%) & 0.064 & \\
\hline
\end{tabular}
\vspace{.6em}
\caption{Spatio-temporal average errors of the traffic models and the Interpolation predictor for the NGSIM data sets (rows 2--4) and the RTMC data (rows 5--6), separated into congested and non-congested days. In each row, the parentheses denote the excess error relative to the best model (GARZ in all cases).}
\label{tab:model_errors}
\end{table}

The time-evolution of the errors \eqref{eq:error_x} confirm several of the above observations, such as the highly oscillatory nature of the Interpolation predictor, and the good accuracy of the GARZ model. Another particularly visible effect is the bad performance of the ARZ model for large densities: for 4:00--4:15, the ARZ model yields smaller errors than LWR; in contrast, for 5:15--5:30, the ARZ leads to larger errors than LWR. Moreover, the peaks in the ARZ errors coincide with local maxima in the traffic density. These observations confirm that: a) the ARZ model has the potential to improve upon the LWR model, and b) the non-uniform stagnation density of the ARZ model significantly affects its predictive accuracy. In contrast, the GARZ model inherits the advantages of the ARZ model for low densities, and furthermore remedies its shortcoming for high densities.

Regarding the performance of the Interpolation predictor; its accuracy is, except for the large oscillations, surprisingly good. Specifically, its average errors \eqref{eq:error_xt} are similar to those of the LWR and ARZ models. The following explanations can be provided to address this observation:
\begin{enumerate}[ a)]
\item The considered road segment is very short. Hence, the coherence between the boundaries and the inside of the domain is high, and the actual delays due to finite-speed information propagation are small. One can expect that on longer road segments, the finite-speed transport of information becomes more relevant, and therefore interpolation significantly loses in accuracy.
\item As described in \S\ref{subsec:list_of_models}, the Interpolation predictor picks up twice (four times) as much data from the boundaries as the second (first) order models. One could therefore argue that the traffic models are as accurate as the interpolation, while utilizing less input data.
\item The Interpolation predictor is on par with an ``incomplete'' model (LWR does not evolve velocities) and a ``defective'' model (ARZ is flawed for large densities). In turn, it does not achieve the accuracy of the GARZ model.
\end{enumerate}

\begin{figure}
\begin{minipage}[b]{.495\textwidth}
\includegraphics[width=\textwidth]{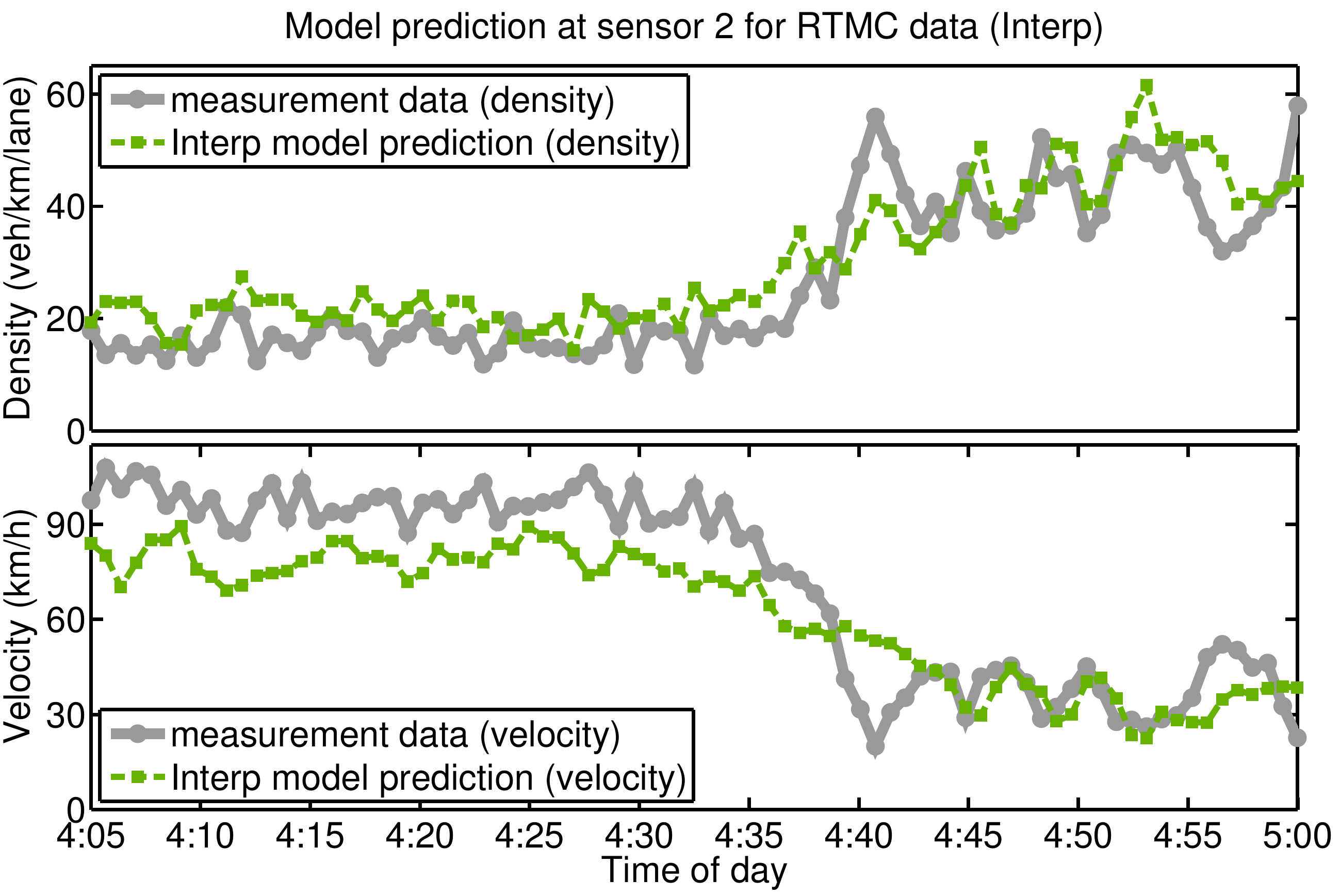}
\end{minipage}
\hfill
\begin{minipage}[b]{.495\textwidth}
\includegraphics[width=\textwidth]{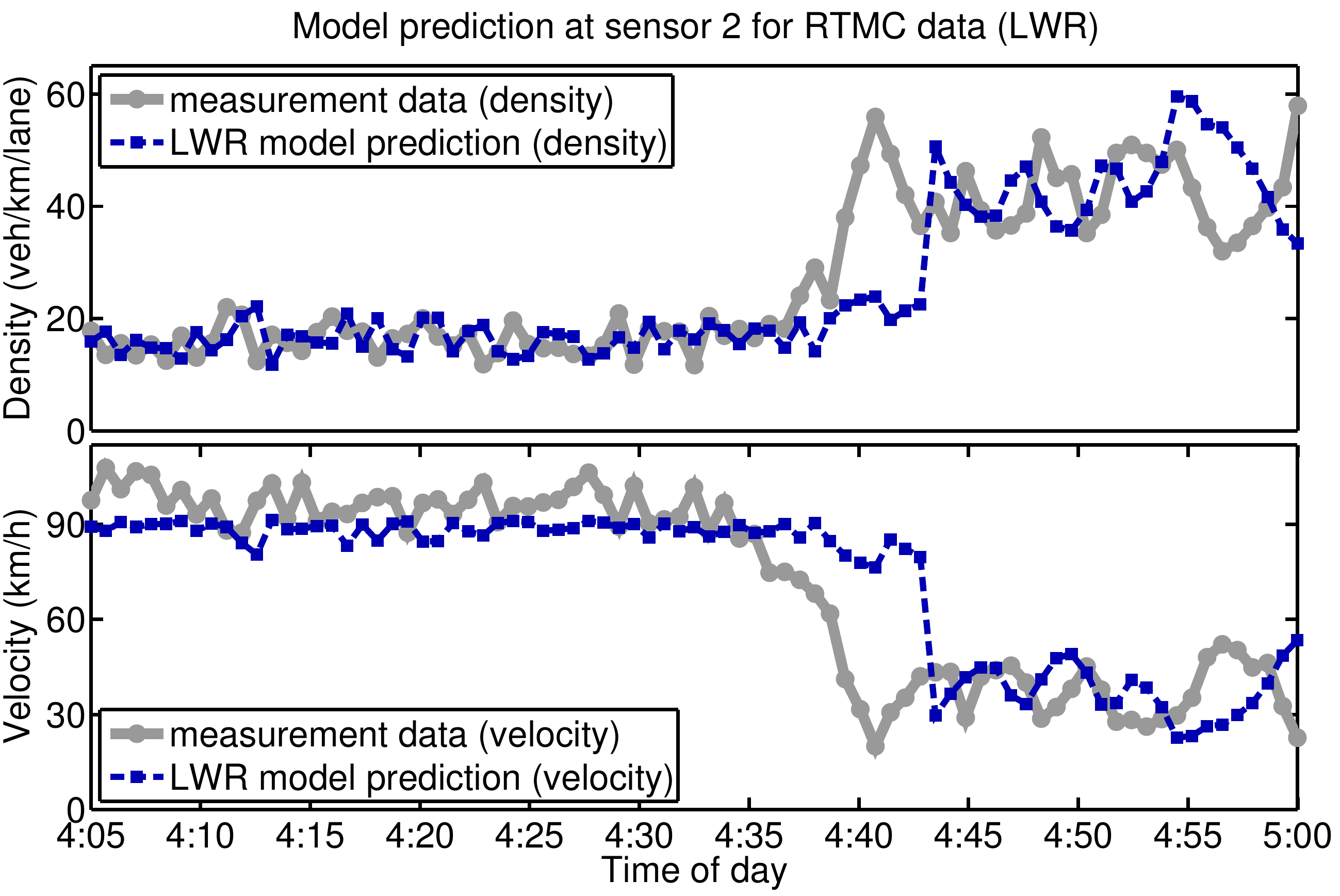}
\end{minipage}

\vspace{.5em}
\begin{minipage}[b]{.495\textwidth}
\includegraphics[width=\textwidth]{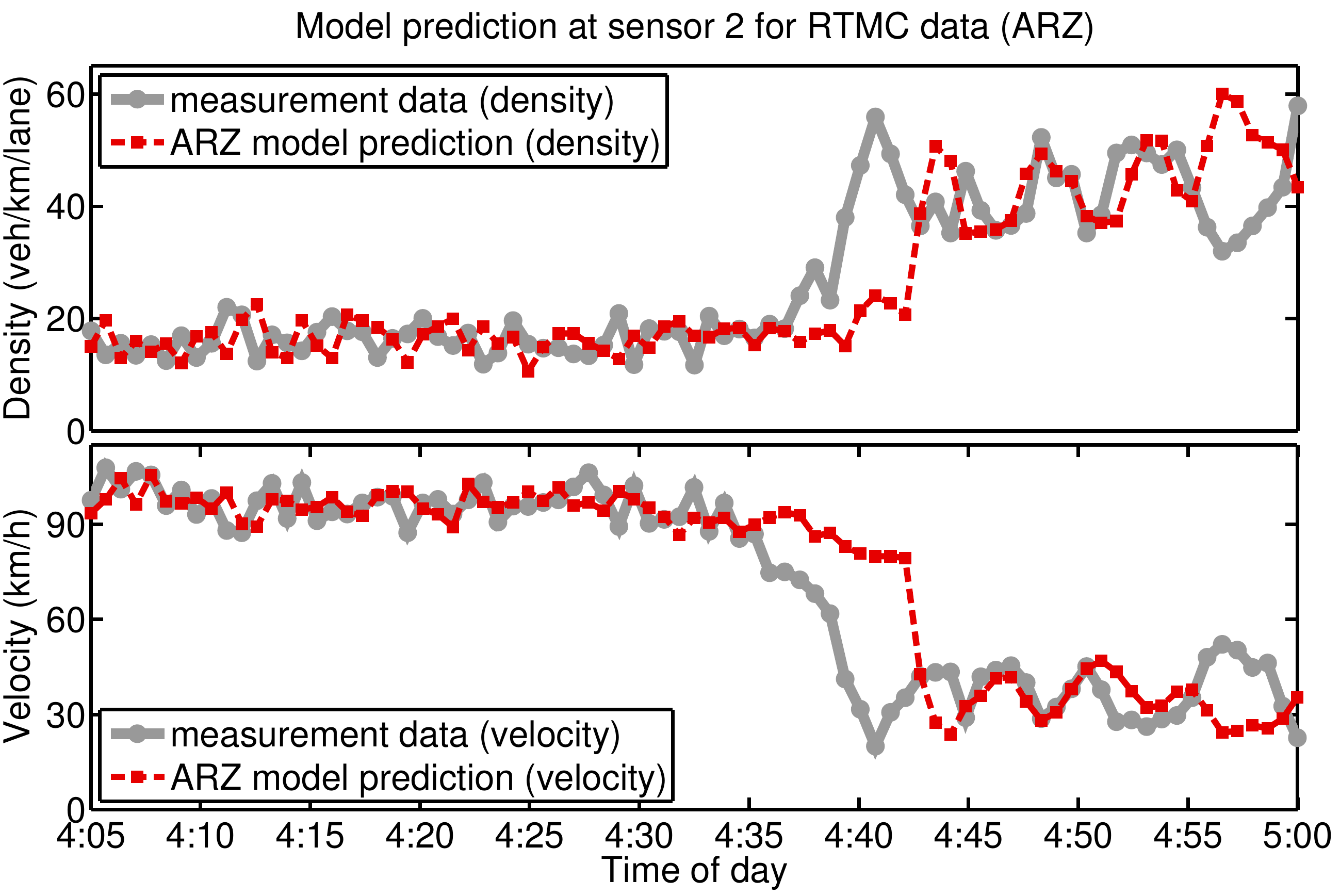}
\end{minipage}
\hfill
\begin{minipage}[b]{.495\textwidth}
\includegraphics[width=\textwidth]{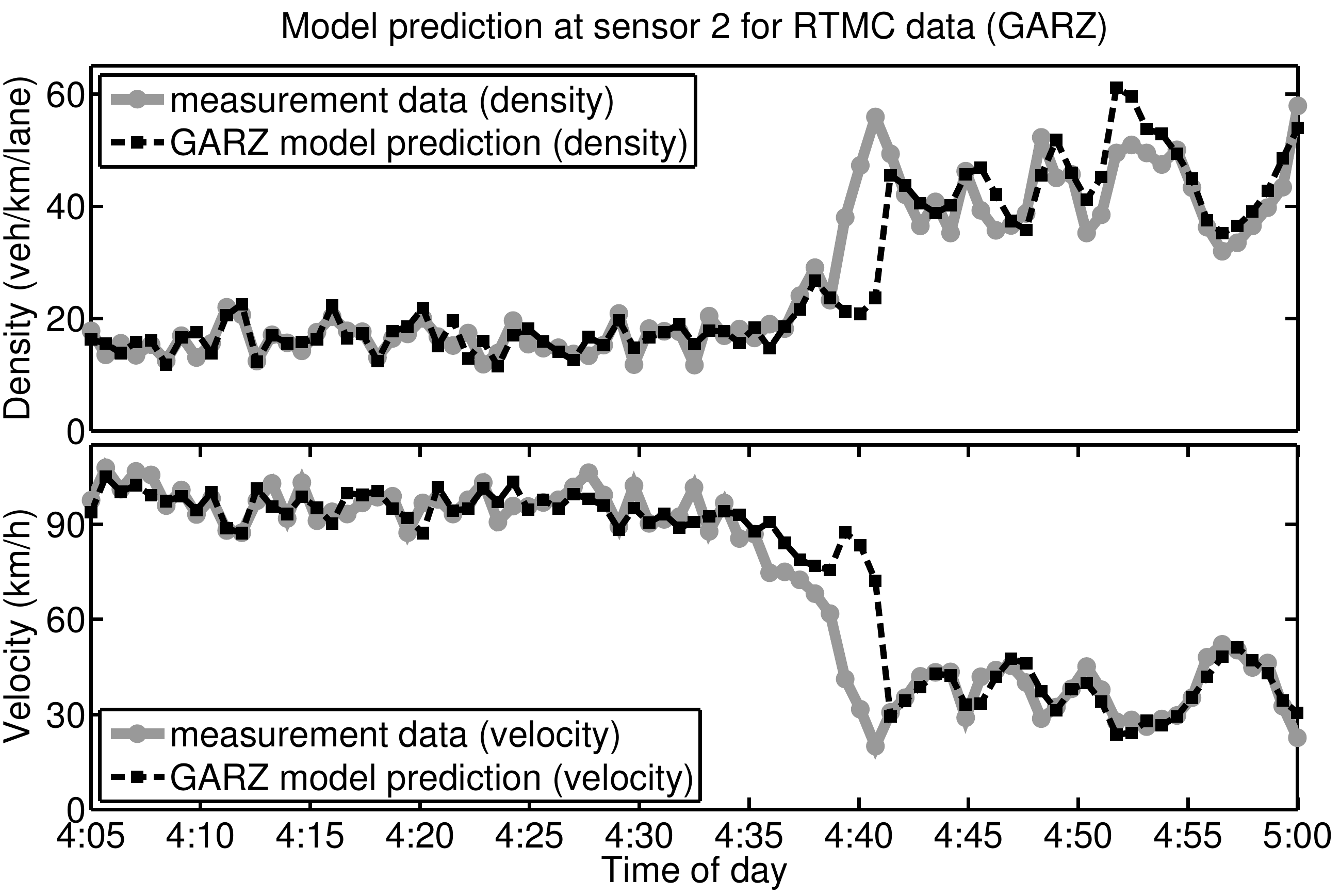}
\end{minipage}

\vspace{-.4em}
\caption{Model comparison on March 26, 2003 in the RTMC data set. In each panel, the time-evolutions (4:05pm--5:00pm) of the model predictions (colored dashed curve) and measured data (solid gray curve) at the middle sensor are shown (top box: $\rho$, bottom box: $u$). The four panels correspond to: Interpolation (top left, green), LWR (top right, blue), ARZ (bottom left, red), GARZ (bottom right, black).}
\label{fig:RTMC_evolution}
\end{figure}

\begin{figure}
\centering
\includegraphics[width=.97\textwidth]{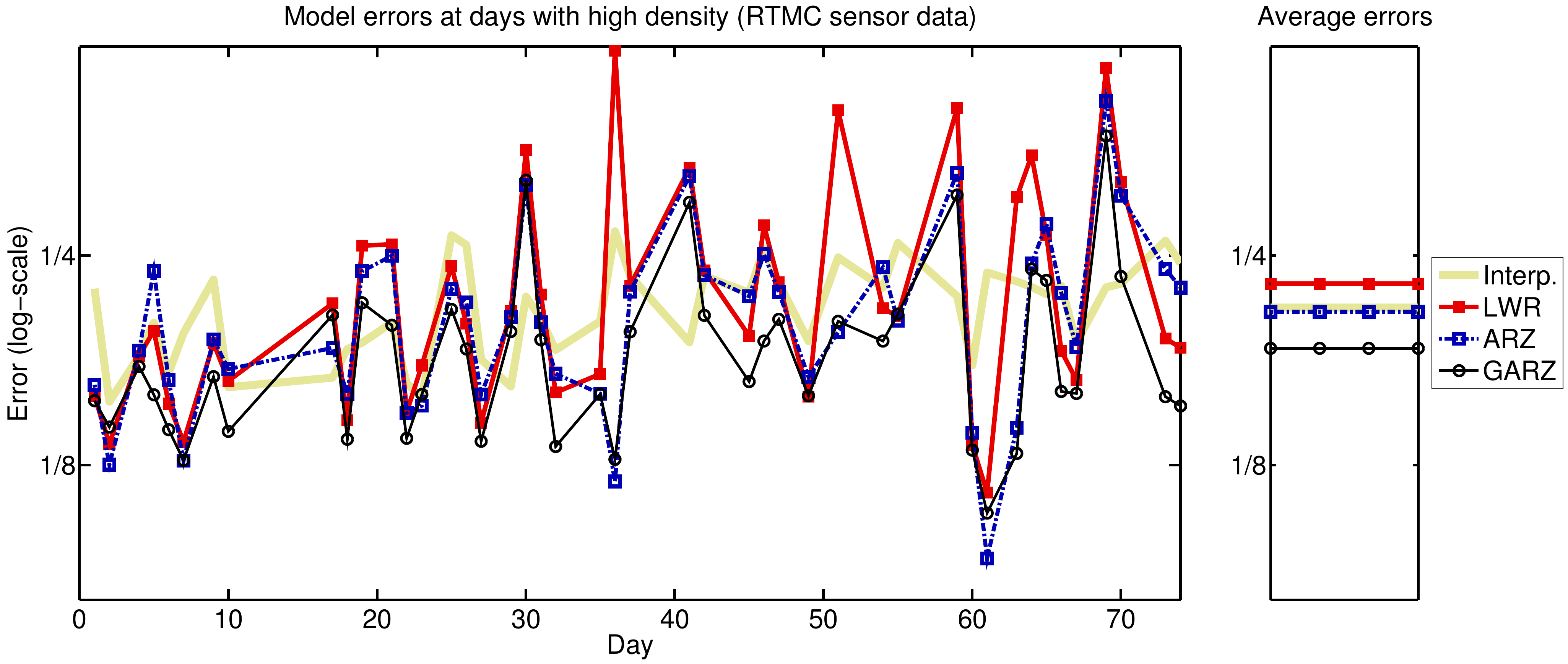}

\vspace{.7em}
\includegraphics[width=.97\textwidth]{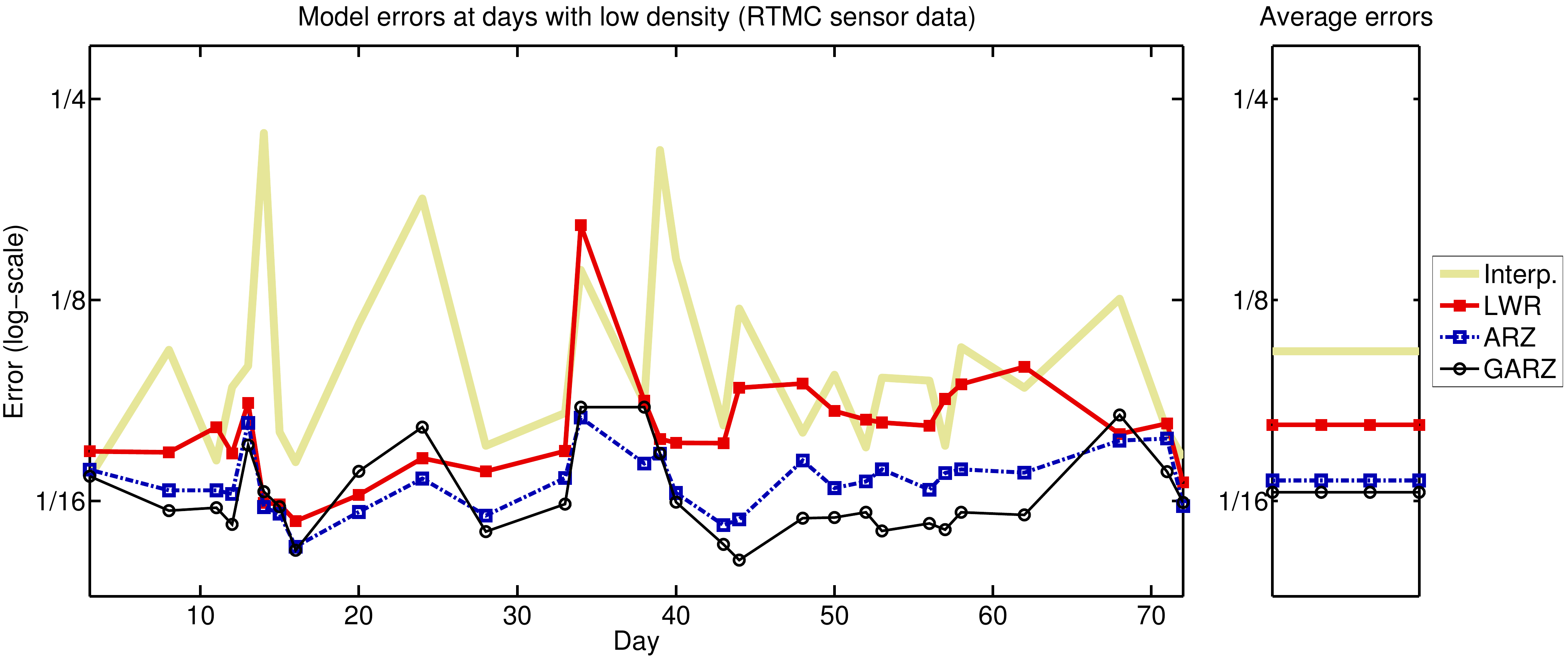}

\vspace{-.4em}
\caption{Comparison of models for the RTMC sensor data on 45 week days with congested traffic (top) and 29 days with non-congested traffic (bottom). The left boxes show the time-averaged error for each day \eqref{eq:error_t}, and the right boxes show the fully averaged errors \eqref{eq:error_day} of congested and non-congested days. The errors for Interpolation, LWR, ARZ, and GARZ are depicted in log-scale.}
\label{fig:RTMC_errors}
\end{figure}

\subsection{Model Validation with the RTMC Sensor Data}
In line with the test described in \cite{FanSeibold2013}, we consider a 1.214\,km long segment of highway on the I-35W, Minneapolis. Two traffic sensors are at the ends of the study segment, and one sensor is inside of the segment (roughly in the middle). The parameters of the data-fitted models, obtained from one-year data at the middle sensor, are given in the RTMC row of Table~\ref{tab:model_parameters}. On each weekday (Monday--Friday) between 01/01/2003 and 04/14/2003, we consider the onset of afternoon rush hour between 4pm and 5pm, and we divide the 74 days into 45 days with congested traffic (the space-time averaged traffic density exceeds 20\,veh/km/lane) and 29 days with non-congested traffic (the remaining ones). As described in \S\ref{subsec:data_treatment}, the traffic models are run through an initialization phase 4:00pm--4:05pm, in which boundary data create a realistic state inside the segment. The actual validation is then conducted in 4:05pm--5:00pm.

Analogously as for the NGSIM data, we first consider the temporal evolution of the traffic states that the models predict and study the qualitative behavior of the models. We look at a day (03/26/2003) on which congestion builds up rapidly (between 4:35pm and 4:40pm) at the sensor position, so that any delays in the model behavior become visible. The results, shown in Fig.~\ref{fig:RTMC_evolution}, confirm the observations made for the NGSIM data: a) LWR and ARZ predict similar densities, but ARZ does a better job at also capturing velocities correctly; b) both LWR and ARZ propagate information too slowly, resulting in the onset of congestion to be predicted 5 minutes too late; c) the GARZ model is not perfect (it still predicts the onset of congestion 2 minutes too late), but it captures the general trends in both variables quite nicely. This last point is particularly visible in the velocity profile in the congested state in 4:42pm--5:00pm.

One aspect that is different from the NGSIM test is that the Interpolation predictor performs visibly worse than the traffic models during the low density and high velocity period 4:05pm--4:35pm. This demonstrates the assertion made in \S\ref{subsec:validation_ngsim}, that simple interpolation performs much worse on longer road segments.

The average model errors obtained with the RTMC data are shown in Fig.~\ref{fig:RTMC_errors}. The top panel collects the 45 congested days, and the bottom panel contains the 29 non-congested days. In the left boxes the time-averaged errors \eqref{eq:error_t} in 4:05pm--5:00pm are shown for each day. The right boxes show the resulting total errors \eqref{eq:error_day}, resulting from averaging over all study days. The numerical values, as well as the excess errors of the models relative to GARZ, are shown in rows 5--6 of Table~\ref{tab:model_errors}.

The results of the low density days demonstrate quite clearly that a) traffic models yield noticeably better predictions than simple interpolation; and b) second-order models reproduce the real traffic behavior better than the first-order LWR model. The fact that the GARZ model does not differ much from the ARZ model results from the fact that for low densities, the two families of flow rate curves are very similar (see Fig.~\ref{fig:models_flow_rate_curves}). In turn, the results of the high density days confirm that a) as expected, the quality of the ARZ model worsens relative to the other models; while b) the GARZ model does not suffer from the same amount of accuracy deterioration than ARZ. A somewhat unexpected observation is that for the high-density days, the Interpolation predictor does not perform worse than LWR and ARZ. The reason for this lies is the fact that LWR and ARZ propagate information too slowly, and thus capture features with a delay (see Fig.~\ref{fig:RTMC_evolution}). In contrast, interpolation propagates information instantaneously---which is obviously unrealistic, but here happens to lead to less severe errors than the spurious delays in LWR and ARZ.

\vspace{1.5em}
\section{Inhomogeneous ARZ and GARZ Models}
\label{sec:inhomogeneous_models}
Thus far we have restricted our attention to homogeneous second-order models, in which each vehicle remains for all times attached to its specific velocity curve $u_w(\rho) = V(\rho,w)$. However, it is plausible that in real traffic, drivers vary their empty road velocity $w$ over time, and that the overall traffic dynamics tend towards an equilibrium velocity curve $U_\text{eq}(\rho) = V(\rho,w_\text{eq})$. We therefore extend our model validations to the inhomogeneous ARZ model \eqref{eq:aw_rascle_zhang_model_w_inhomogeneous}, denoted ARZ-$\tau$, and the inhomogeneous GARZ model \eqref{eq:GARZ_model_inhomogeneous}, denoted GARZ-$\tau$. In both cases the equilibrium velocity curve is the LWR velocity curve.

As argued in \S\ref{subsubsec:relaxtion_GARZ_LWR}, for a Cauchy problem the solutions of the inhomogeneous second-order models converge (as $t\to\infty$ for $\tau$ fixed; or as $\tau\to 0$ for $t_\text{final}$ fixed) to solutions of the first-order LWR model. In the presence of boundary data, the same statement holds; however, the limits of the second-order model solutions are LWR solutions \emph{with different boundary conditions} than the LWR solutions that we consider here. Consequently, the models ARZ-$\tau$ and GARZ-$\tau$ are not merely perturbations of the LWR model, but instead could reproduce the dynamics of real traffic better than LWR and better than homogeneous second-order models.

\begin{figure}
\begin{minipage}[b]{.49\textwidth}
\includegraphics[width=\textwidth]{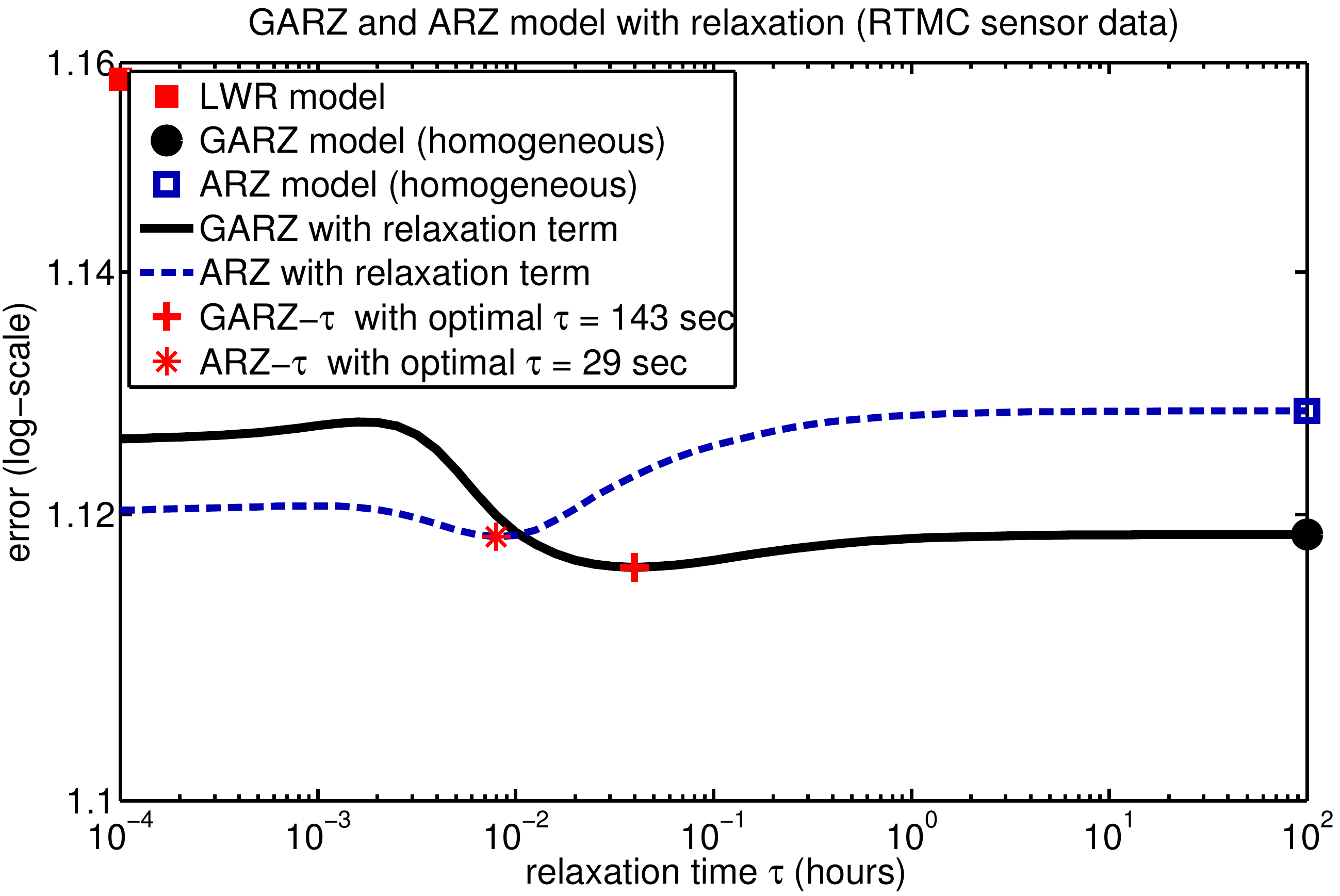}
\end{minipage}
\hfill
\begin{minipage}[b]{.49\textwidth}
\includegraphics[width=\textwidth]{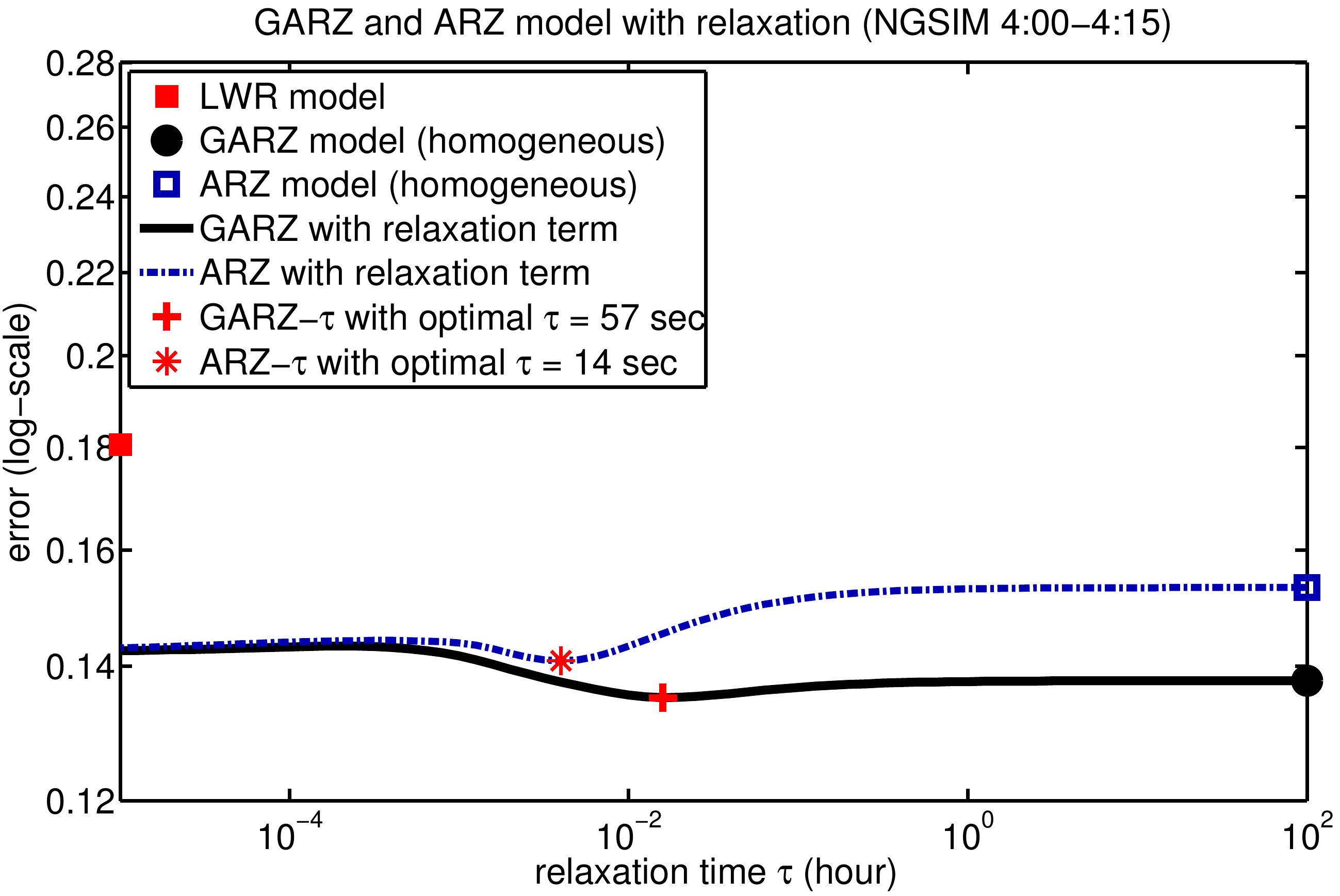}
\end{minipage}

\vspace{.5em}
\begin{minipage}[b]{.49\textwidth}
\includegraphics[width=\textwidth]{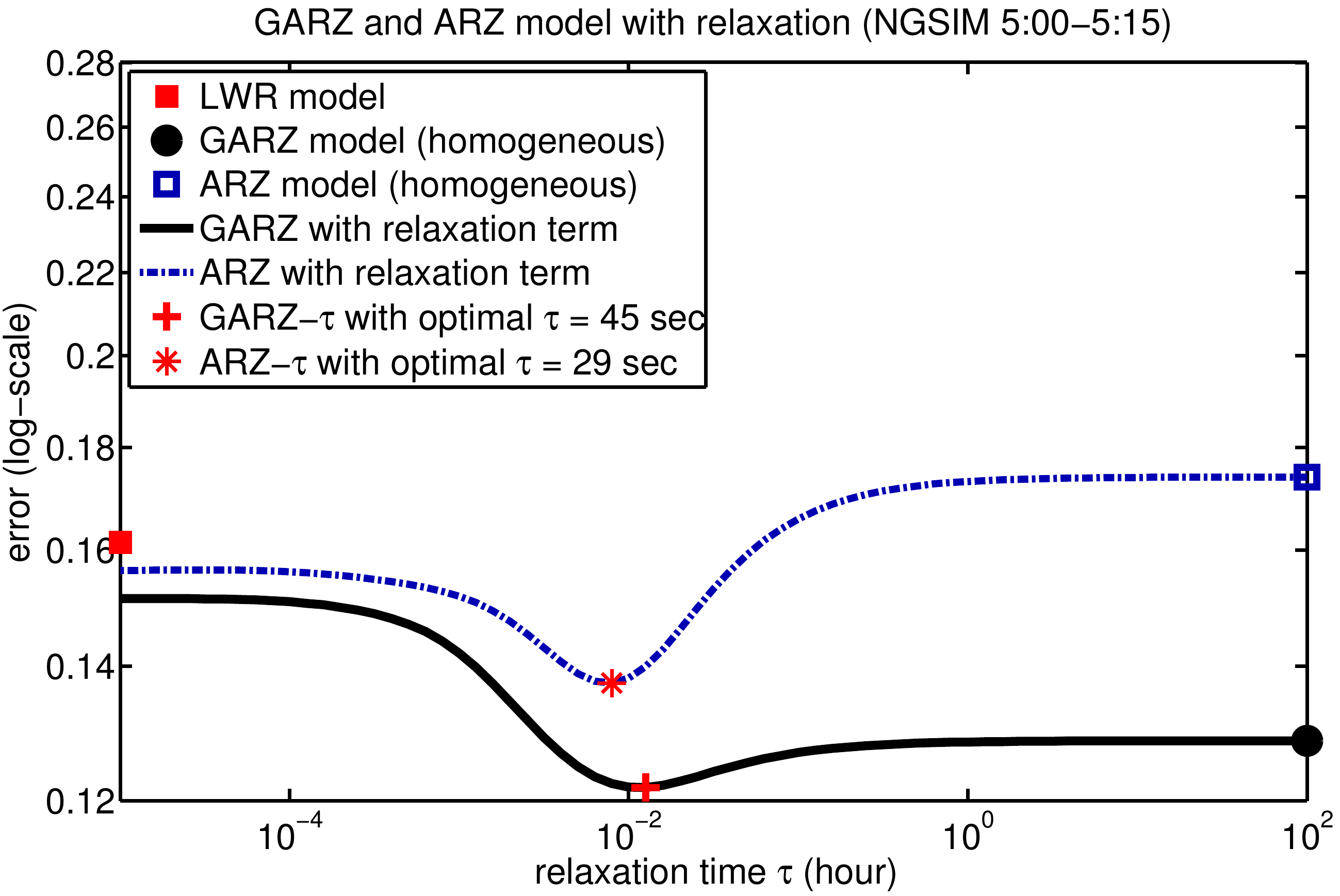}
\end{minipage}
\hfill
\begin{minipage}[b]{.49\textwidth}
\includegraphics[width=\textwidth]{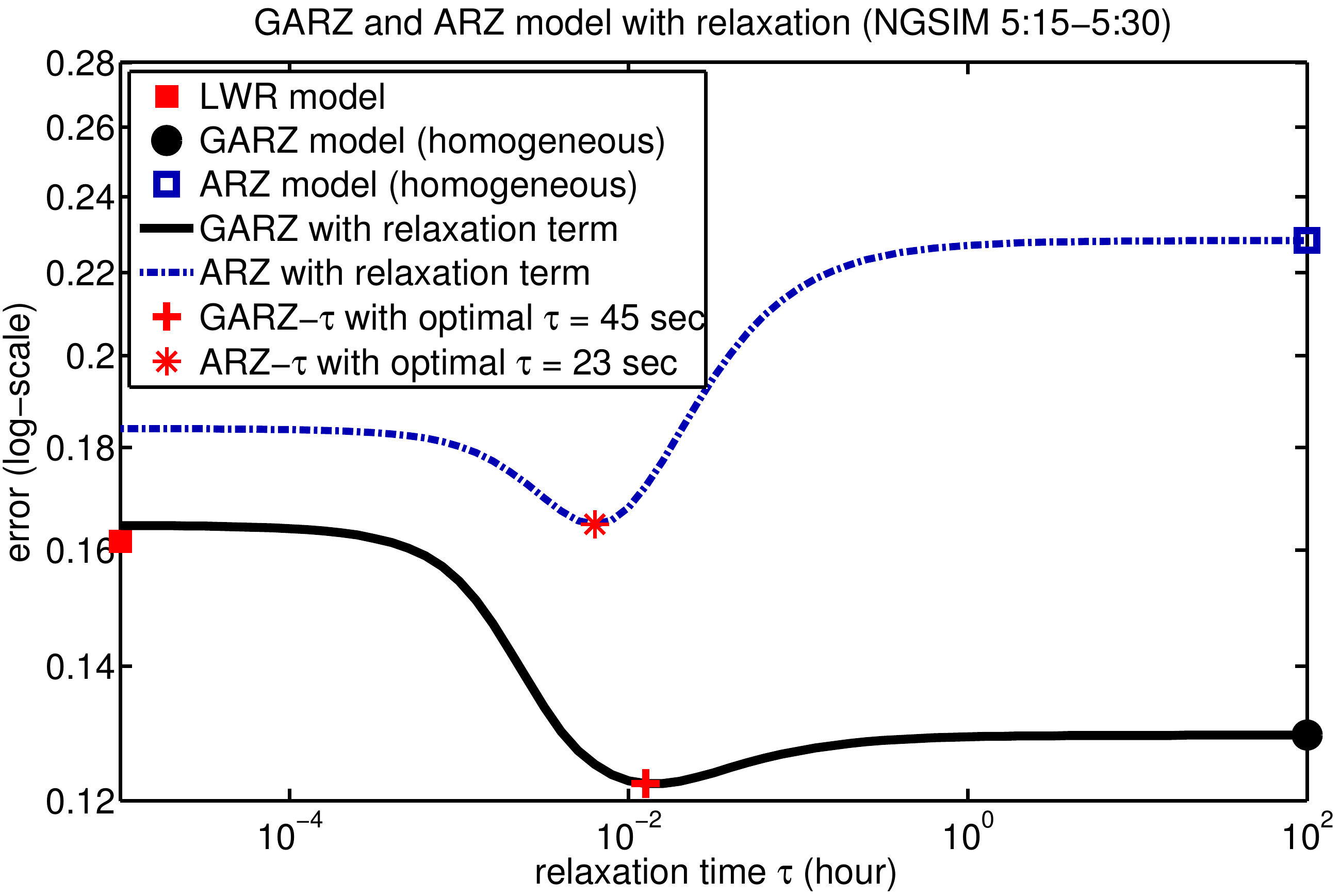}
\end{minipage}

\vspace{-.4em}
\caption{Model errors of the inhomogeneous ARZ and GARZ models, ARZ-$\tau$ (dashed blue) and GARZ-$\tau$ (solid black), as functions of the relaxation time $\tau$. Shown are the time-averaged errors \eqref{eq:error_t} for one day in the RTMC data set (top left), and spatio-temporally averaged errors \eqref{eq:error_xt} for the three NSGIM data sets: 4:00pm--4:15pm (top right), 5:00pm--5:15pm (bottom left), and 5:15pm--5:30pm (bottom right). Also shown are the errors with the homogeneous (i.e., $\tau\to\infty$) ARZ model (blue square) and GARZ model (black circle), the error obtained with the LWR model (red square), and the choices of $\tau$ that yield the smallest model error (red star and red plus).}
\label{fig:error_tau}
\end{figure}

The inhomogeneous second-order models are based on the same data-fitted functions as their homogeneous counterparts. The only new parameter is the relaxation time $\tau$. Since historic fundamental diagram data is commonly devoid of temporal information relevant for traffic dynamics, there is no canonical way to obtain the value of $\tau$ from historic data. Regarding realistic choices for $\tau$, the only information found in the literature is that it cannot be much smaller than 3~seconds due to physical restrictions of the vehicle engines. Due to this absence of a good modeling principle for the value of $\tau$, we conduct our model validation procedure for multiple instances of the inhomogeneous ARZ and GARZ models, where we let $\tau$ range from milliseconds to days. For each test, whichever choice of $\tau$ yields the smallest model error is then the optimal ARZ-$\tau$ (respectively optimal GARZ-$\tau$) model.

The described $\tau$-study is conducted for one day (January 8, 2003) in the RTMC data set, and for the three NGSIM data sets. For the RTMC data we evaluate the temporally averaged error \eqref{eq:error_t}, and for the NSGIM data the spatio-temporally averaged errors \eqref{eq:error_xt}. The results of the study are shown in Fig.~\ref{fig:error_tau}. In each test case, the ARZ-$\tau$ and GARZ-$\tau$ models are computed for many values of $\tau$ (blue and black functions, respectively). In addition, the (homogeneous) ARZ and GARZ models are computed (blue square and black circle). As expected, they agree with ARZ-$\tau$ and GARZ-$\tau$, respectively, for $\tau\gg 1$. Moreover, the LWR model is computed (red square). As argued above, due to the presence of boundary data with $u\neq U_\text{eq}(\rho)$, its results are different from ARZ-$\tau$ and GARZ-$\tau$ for $\tau\ll 1$. Finally, in each test case the particular $\tau = \tau_\text{opt}$ is marked (red star and red plus) for which ARZ-$\tau$ and GARZ-$\tau$, respectively, yield the smallest model errors. It is apparent that in all four cases shown in Fig.~\ref{fig:error_tau}, the inhomogeneous models possess an optimal relaxation time $0<\tau_\text{opt}<\infty$.

\begin{table}
\begin{tabular}{|ll||lr|lr|lr|lr|l}
\hline
Data set\rule{0em}{1.0em}\hspace{-1em} && ARZ\hspace{-1em}&&
ARZ-$\tau_{\text{opt}}$\hspace{-1em} & & GARZ\hspace{-.5em} & & GARZ-$\tau_\text{opt}$\hspace{-1em} &  \\
\hline\hline
NGSIM\rule{0em}{1.0em}
      & 4:00--4:15 & 0.153 &(+13\%) & 0.141 &  (+4\%) & 0.138 & (+2\%) & 0.135 & \\
NGSIM & 5:00--5:15 & 0.174 &(+43\%) & 0.137 & (+13\%) & 0.129 & (+6\%) & 0.122 & \\
NGSIM & 5:15--5:30 & 0.228 &(+87\%) & 0.165 & (+35\%) & 0.129 & (+6\%) & 0.122 & \\
\hline
RTMC\rule{0em}{0em}
     & congested   &0.208 &(+18\%) & 0.192 &(+8\%) & 0.184 &(+4\%) & 0.177 & \\
RTMC & non-cong.   &0.067 &(+6\%) & 0.066 &(+5\%) & 0.064 & (+2\%) & 0.063 & \\
\hline
\end{tabular}
\vspace{.6em}
\caption{Spatio-temporal average errors of the homogeneous ARZ and GARZ models, together with the inhomogeneous versions, ARZ-$\tau$ and GARZ-$\tau$, with optimal choices of the relaxation time $\tau$. Shown are the results for the three NGSIM data sets (rows 2--4), as well as the RTMC data, separated into congested and non-congested days (rows 5--6). In each row, the parentheses denote the excess error relative to the best model (GARZ-$\tau_\text{opt}$ in all cases). For the RTMC data, the optimal $\tau$ values are computed separately for each day; then, averages over all respective days are taken.}
\label{tab:model_errors_tau}
\end{table}

As one can see in Fig.~\ref{fig:error_tau}, the error-minimizing relaxation times are $\tau_\text{opt}\approx 25\text{s}$ for the ARZ model (the result for NGSIM 4:00pm--4:15pm is not as reliable, because the minimum is weakly pronounced), and $\tau_\text{opt}\approx 50\text{s}$ for the GARZ model in the NGSIM data. Moreover, for the RTMC data, the GARZ model yields $\tau_\text{opt}\approx 150\text{s}$. If the values of $\tau_\text{opt}$ give an indication about the time scales on which driving behavior evolves in reality, then their relatively large values give rise to the interesting observation that real driving behavior changes rather slowly; much slower than the time scales on which vehicles are able to accelerate due to engine power. In addition, one can observe that the GARZ model's optimal relaxation time is noticeably slower for the RTMC data than it is for the NGSIM data. While it is in principle possible that this difference stems from differences in driving behavior between Minnapolis vs.~the San Francisco Bay Area, more plausibly the difference stems from the combination of a relatively flat minimum of the black error curve, and from aggregation and lack of precision effects in the sensor data.

In order to quantify how much the inhomogeneous second-order models can improve the model accuracy relative to homogeneous second-order models, we consider the model errors of the ARZ-$\tau$ and GARZ-$\tau$ models (with $\tau = \tau_\text{opt}$ for each test case) in comparison with the errors of the ARZ and GARZ models. The resulting error values, together with the excess errors relative to the GARZ-$\tau_\text{opt}$ model shown in parentheses,  are shown in Table~\ref{tab:model_errors_tau}. One can see that the addition of a relaxation term can improve the accuracy of the ARZ model noticeably, in particular for flow at high traffic densities (see NSGIM 5:15--5:30). This is in part due to the fact that the ARZ model's unrealistic spread in flow rate curves for large densities is ameliorated by a relaxation towards the LWR curve. In contrast, the addition of a relaxation term (of the considered form) to the GARZ model does hurt the model accuracy, but it does not lead to significant improvements either; the GARZ model appears to be quite good already in its homogeneous form, at least for the data sets studied here.

\vspace{1.5em}
\section{Conclusions}
\label{sec:conclusions}
We have presented a systematic approach to construct a data-fitted generalized Aw-Rascle-Zhang (GARZ) model of traffic flow from historic fundamental diagram data. The modeling advantages and the mathematical properties of the GARZ model have been discussed. Moreover, the predictive accuracy of the GARZ model has been compared with other macroscopic traffic models via a three-detector test on vehicle trajectory and stationary sensor data. The actual model comparison has been carried out in a macroscopic sense, i.e., discretization effects are kept small, and thus the accuracy of the actual PDE is investigated.

The model comparison considers a hierarchy of models: (i) the first-order Lighthill-Whitham-Richards (LWR) model, that is defined via a single curve $Q(\rho)$ in the fundamental diagram; (ii) the second-order Aw-Rascle-Zhang (ARZ) model, that is defined via a family of curves in the fundamental diagram $Q_w(\rho) = Q(\rho)+\rho(w-Q'(0))$, each of which is the LWR curve plus a linear function; (iii) the second-order GARZ model, that is defined via a family of curves that are not simple transformations of each other, i.e., $Q_w(\rho) = \rho V(\rho,w)$, where $V(\rho,w)$ is a two-parameter generalized velocity function; and finally, the model validation is augmented by (iv) an Interpolation predictor that reconstructs the traffic density and velocity in the study domain via a simple linear interpolation of the boundary data. Moreover, the second-order traffic models are considered in their homogeneous form, ARZ and GARZ, as well as inhomogeneous versions thereof, ARZ-$\tau$ and GARZ-$\tau$, for which a relaxation term towards the LWR equilibrium state is added.

The general conclusions that can be drawn from the comparisons are as follows. First, and most importantly, the GARZ model yields the most accurate predictions, and it reproduces the behavior of the real data in the best fashion. Second, the ARZ model is superior to the LWR model for low traffic densities; but its model accuracy suffers noticeably as traffic becomes more congested. This observation can be interpreted as a manifestation of the ARZ model's multiple stagnation densities---a drawback that is addressed by the GARZ model. Third, in terms of averaged errors, the Interpolation predictor yields model errors that are similar to, or even slightly lower than, those of the LWR and ARZ models. At the same time, predictions based on mere interpolation are more oscillatory and less sharp than predictions obtained via traffic models. In addition, interpolation performs less well on longer road segments. Fourth, it is observed that the addition of a relaxation term can further improve the accuracy of second-order models. However, a noticeable improvement is only observed when the relaxation time $\tau$ is chosen well. In turn, when $\tau$ is selected too small, the inhomogeneous model could be less accurate than the corresponding homogeneous model.

The studies of the inhomogeneous second-order models reveal that the optimal relaxation times lie in the range 14--60 seconds (with some values even larger). This is two orders of magnitude larger than the drivers' reaction time, and one order of magnitude larger than what the vehicles' engine capabilities would allow. A possible explanation for such seemingly slow relaxation is that in the ARZ and the GARZ models, it is the drivers' empty road velocities that relax, i.e., their general driving behavior, and not the actual vehicle velocities.

One modeling shortcoming of the ARZ model that is not addressed by the GARZ model studied in this paper is that no unique functional relationship between flow rate and density for low densities (``free flow regime'') is allowed. To address this point, phase transition models (cf.~\cite{Colombo2003, Goatin2006}) and variations of the ARZ model that result from introducing an upper bound on the vehicle velocity \cite{ColomboMarcelliniRascle2003} have been proposed. Moreover, a specific phase transition model has been applied in a practical context \cite{BlandinWorkGoatinPiccoliBayen2011}. The possibility to allow for a unique flow rate vs.\ density relationship may be of importance in certain applications. As described in a companion paper \cite{FanPiccoliSeibold2014}, this task can also be achieved within the GARZ framework, by collapsing the fundamental diagram curves into a single function in the free flow regime.

\vspace{.5em}
\section*{Acknowledgments}
M. Herty was supported by Excellenz Cluster EXC128, DAAD 55866082, and BMBF KinOpt 05M2013. B. Seibold would like to acknowledge the support by the National Science Foundation. This work was supported through grant DMS--1318709, and partially supported through grants DMS--1115269 and DMS--1318641.

\bibliographystyle{plain}
\bibliography{references_complete}

\end{document}